\newif\ifdraft
\newif\ifarxiv
\newif\ifone
\DeclareAcronym{URLLC}{
  short = URLLC ,
  long  = Ultra-Reliable Low Latency Communication ,
  class = abbrev
}
\DeclareAcronym{DMRS}{
  short = DMRS ,
  long  = Demodulation Reference Signal,
  class = abbrev
}
\DeclareAcronym{WI}{
  short = WI ,
  long  = Work Item,
  class = abbrev
}
\DeclareAcronym{QAM}{
  short = QAM ,
  long  = Quadrature Amplitude Modulation ,
  class = abbrev
}
\DeclareAcronym{OFDM}{
  short = OFDM ,
  long  = Orthogonal Frequency Division Multiplexing ,
  class = abbrev
}
\DeclareAcronym{OFDMA}{
  short = OFDMA ,
  long  = Orthogonal Frequency Division Multiplexing Access ,
  class = abbrev
}
\DeclareAcronym{E2E}{
  short = E2E ,
  long  = End-to-End ,
  class = abbrev
}
\DeclareAcronym{DL}{
  short = DL ,
  long  = DownLink ,
  class = abbrev
}
\DeclareAcronym{MCS}{
  short = MCS ,
  long  = Modulation and Coding Scheme ,
  class = abbrev
}
\DeclareAcronym{CFI}{
  short = CFI ,
  long  = Control Format Indicator ,
  class = abbrev
}
\DeclareAcronym{UL}{
  short = UL ,
  long  = UpLink ,
  class = abbrev
}
\DeclareAcronym{SR}{
  short = SR ,
  long  = Scheduling Request ,
  class = abbrev
}
\DeclareAcronym{BER}{
  short = BER ,
  long  = Bit Error Rate ,
  class = abbrev
}
\DeclareAcronym{BLER}{
  short = BLER ,
  long  = Block Error Rate ,
  class = abbrev
}
\DeclareAcronym{CRC}{
  short = CRC ,
  long  = Cyclic Redundancy Check ,
  class = abbrev
}
\DeclareAcronym{CB}{
  short = CB ,
  long  = Code Block ,
  class = abbrev
}
\DeclareAcronym{CBG}{
  short = CBG ,
  long  = Code Block Group ,
  class = abbrev
}
\DeclareAcronym{R-CBG}{
  short = R-CBG ,
  long  = Reduced Code Block Group ,
  class = abbrev
}
\DeclareAcronym{AR-CBG}{
  short = AR-CBG ,
  long  = Adaptive Reduced Code Block Group ,
  class = abbrev
}
\DeclareAcronym{TB}{
  short = TB ,
  long  = Transport Block ,
  class = abbrev
}
\DeclareAcronym{BG2}{
  short = BG2 ,
  long  = Base Graph 2 ,
  class = abbrev
}
\DeclareAcronym{BG1}{
  short = BG1 ,
  long  = Base Graph 1 ,
  class = abbrev
}
\DeclareAcronym{SNR}{
  short = SNR ,
  long  = Signal-to-Noise Ratio ,
  class = abbrev
}
\DeclareAcronym{CCE}{
  short = CCE ,
  long  = Control Channel Element ,
  class = abbrev
}
\DeclareAcronym{PDCCH}{
  short = PDCCH ,
  long  = Physical Downlink Control Channel ,
  class = abbrev
}
\DeclareAcronym{PUCCH}{
  short = PUCCH ,
  long  = Physical Uplink Control Channel ,
  class = abbrev
}
\DeclareAcronym{LTE}{
  short = LTE ,
  long  = Long Term Evolution ,
  class = abbrev
}
\DeclareAcronym{NGMN}{
  short = NGMN ,
  long  = Next Generation Mobile Networks ,
  class = abbrev
}
\DeclareAcronym{RNTI}{
  short = RNTI ,
  long  = Radio Network Temporary Identifier ,
  class = abbrev
}
\DeclareAcronym{3GPP}{
  short = 3GPP ,
  long  = 3rd Generation Partnership Project ,
  class = abbrev
}
\DeclareAcronym{HRLLC}{
  short = HRLLC ,
  long  = High-Reliable Low Latency Communication ,
  class = abbrev
}
\DeclareAcronym{TTI}{
  short = TTI ,
  long  = Transmission Time Interval ,
  class = abbrev
}
\DeclareAcronym{sTTI}{
  short = sTTI ,
  long  = short Transmission Time Interval ,
  class = abbrev
}
\DeclareAcronym{RTT}{
  short = RTT ,
  long  = Round Trip Time ,
  class = abbrev
}
\DeclareAcronym{LDPC}{
  short = LDPC ,
  long  = Low-Density Parity-Check ,
  class = abbrev
}
\DeclareAcronym{UE}{
  short = UE ,
  long  = User Equipment ,
  class = abbrev
}
\DeclareAcronym{BS}{
  short = BS ,
  long  = Base Station ,
  class = abbrev
}
\DeclareAcronym{FPR}{
  short = FPR ,
  long  = False-Positive Rate ,
  class = abbrev
}
\DeclareAcronym{FNR}{
  short = FNR ,
  long  = False-Negative Rate ,
  class = abbrev
}
\DeclareAcronym{DCI}{
  short = DCI ,
  long  = Downlink Control Information ,
  class = abbrev
}
\DeclareAcronym{HARQ}{
  short = HARQ ,
  long  = Hybrid Automatic Repeat reQuest ,
  class = abbrev
}
\DeclareAcronym{E-HARQ}{
  short = E-HARQ ,
  long  = Early HARQ ,
  class = abbrev
}
\DeclareAcronym{mMTC}{
  short = mMTC ,
  long  = massive Machine Type Communications ,
  class = abbrev
}
\DeclareAcronym{5G}{
  short = 5G ,
  long  = Fifth Generation ,
  class = abbrev
}
\DeclareAcronym{SPS}{
  short = SPS ,
  long  = Semi-Persistent Scheduling ,
  class = abbrev
}
\DeclareAcronym{PI}{
  short = PI ,
  long  = Pre-emption Indication ,
  class = abbrev
}
\DeclareAcronym{NR}{
  short = NR ,
  long  = New Radio ,
  class = abbrev
}
\DeclareAcronym{eMBB}{
  short = eMBB ,
  long  = enhanced Mobile BroadBand ,
  class = abbrev
}
\DeclareAcronym{LLR}{
  short = LLR ,
  long  = Log-Likelihood Ratio ,
  class = abbrev
}
\DeclareAcronym{VNR}{
  short = VNR ,
  long  = Variable Node Reliability ,
  class = abbrev
}
\DeclareAcronym{BSC}{
  short = BSC ,
  long  = Binary Symmetric Channel ,
  class = abbrev
}
\DeclareAcronym{angelsperarea}{
  short = \ensuremath{a} ,
  long  = The number of angels per unit area ,
  sort  = a ,
  class = nomencl
}
\DeclareAcronym{numofangels}{
  short = \ensuremath{N} ,
  long  = The number of angels per needle point ,
  sort  = N ,
  class = nomencl
}
\DeclareAcronym{areaofneedle}{
  short = \ensuremath{A} ,
  long  = The area of the needle point ,
  sort  = A ,
  class = nomencl
}
\newcommand{\stkout}[1]{\ifmmode\text{\sout{\ensuremath{#1}}}\else\sout{#1}\fi}
\newcommand{\added}[1]{\textcolor{blue}{#1}}
\newcommand{\deleted}[1]{\textcolor{blue}{\stkout{#1}}}
\newcommand{\replaced}[2]{\textcolor{blue}{#1 \stkout{#2}}}
\newcommand{\deletedfloat}[1]{\textcolor{blue}{#1}}
\newcommand{\added}[1]{#1}
\newcommand{\deleted}[1]{}
\newcommand{\replaced}[2]{#1}
\newcommand{\deletedfloat}[1]{}
\newtheorem{lemma}{Lemma}
\def\BibTeX{{\rm B\kern-.05em{\sc i\kern-.025em b}\kern-.08em
    T\kern-.1667em\lower.7ex\hbox{E}\kern-.125emX}}
\begin{document}
\title{Enhanced Machine Learning Techniques for Early HARQ Feedback Prediction in 5G}

\author{Nils Strodthoff*, Bar{\i}{\c s} G{\" o}ktepe*, Thomas Schierl, Cornelius Hellge and Wojciech Samek
\thanks{Manuscript received Dec., 15th, 2018; revised May, 5th, 2019; accepted May, 29th.}
\thanks{This work was supported by the Bundesministerium f\"ur Bildung und Forschung
(BMBF) through the Berlin Big Data Center under Grant 01IS14013A and the Berlin
Center for Machine Learning under Grant 01IS18037I.}
\thanks{All authors are with Fraunhofer Heinrich Hertz Institute, 10587 Berlin, Germany (e-mail: firstname.lastname@hhi.fraunhofer.de).\newline*Both authors contributed equally to this work (Corresponding authors: Cornelius Hellge, Wojciech Samek).\ifarxiv\newline \copyright 2019 IEEE. Personal use of this material is permitted. Permission from IEEE must be obtained for all other uses, in any current or future media, including reprinting/republishing this material for advertising or promotional purposes, creating new collective works, for resale or redistribution to servers or lists, or reuse of any copyrighted component of this work in other works.\else\fi
}
}

\maketitle

\begin{abstract}
We investigate Early Hybrid Automatic Repeat reQuest (E-HARQ) feedback schemes enhanced by machine learning techniques as a path towards ultra-reliable 
and low-latency communication (URLLC). To this end, we propose machine learning methods to predict the outcome of the decoding process ahead of the end of the transmission.
We discuss different input features and classification algorithms ranging from traditional methods to newly developed supervised autoencoders. These methods are evaluated based on their prospects of complying with the URLLC requirements of effective block error rates below $10^{-5}$ at small latency overheads. We provide 
realistic performance estimates in a system model incorporating scheduling effects to demonstrate the feasibility of E-HARQ across
different signal-to-noise ratios, subcode lengths, channel conditions and system loads, and show the benefit over regular HARQ and existing E-HARQ schemes without machine learning.
\end{abstract}

\begin{IEEEkeywords}
  5G mobile communication, Low latency communication, Physical layer, Machine learning, Anomaly detection, Deep learning
  \end{IEEEkeywords}
\section{Introduction}
The next generation \ac{5G} wireless mobile networks is driven by new emerging use cases, such as \ac{URLLC} \cite{ngmn_whitepaper}. \ac{URLLC} applications such as tactile Internet, industrial automation or smart grids contribute to increasing demands on the underlying communication system which have not existed as such before \cite{vtc_urllc}. Depending on the actual application either very low latency or high reliability or a combination of both are required. In contrast to \ac{LTE}, where services were provided in a best effort manner, 5G networks have to guarantee these requirements. In particular for \ac{URLLC}, the ITU proposed an end-to-end latency of \unit[1]{ms} and a packet error rate of $10^{-5}$ \cite{itu_2020}. These demanding requirements have kicked-off discussions in the 3GPP Rel.\ 16 standardization process on how to fulfill these. Self-contained subframes and grant-free access have been proposed to address these requirements on the air interface side \cite{takeda2017latency}. However, the impact on well-known mechanisms in wireless mobile networks is still unclear. In particular, the \ac{HARQ} procedure poses a bottleneck for achieving \replaced{the previously mentioned}{aforementioned} latencies. 

\ac{HARQ} is a physical layer mechanism that employs feedback to transmit at higher target \acp{BLER}, while achieving robustness of the transmission by providing retransmissions based on the feedback (ACK - acknowledgment / NACK - non-acknowledgment). However, it imposes an additional delay on the transmission, designated as \ac{HARQ} \ac{RTT}. The \ac{HARQ} \ac{RTT} incorporates unavoidable physical delays, such as processing times (hardware delays), propagation delays, and \acp{TTI}, i.e.\ the transmission duration. This lead to the abandonment of \ac{HARQ} for the \unit[1]{ms} end-to-end latency use case of \ac{URLLC} at least for the initial \ac{URLLC} specification in Rel.~15\cite{3gpp_92b_minutes}. This decision implied that the code rate had to be lowered such that a single shot transmission, i.e. no retransmissions and no feedback, achieves the required \ac{BLER}. On the one hand, this simplifies the system design, however on the other hand it sacrifices the overall spectral efficiency of \ac{URLLC} transmissions. Hence, reducing the \ac{RTT} to a minimum for enabling \ac{HARQ} in \ac{URLLC} becomes a critical issue. Even use cases allowing regular \ac{HARQ} retransmissions within the latency budget, profit from the reduced \ac{HARQ} \ac{RTT} enabling more \ac{HARQ} iterations, i.e.\ using a higher code rate for each re-/transmission due to more \ac{HARQ} iterations in total within the latency budget. This improves the overall spectral efficiency in common \ac{HARQ} applications.

In this work, we focus on providing the \ac{HARQ} feedback earlier by predicting the decoding outcome, designated as \ac{E-HARQ} in the following. As will be discussed in detail below, the HARQ RTT is composed of several components including propagation delays and processing delays. Our approach focuses solely on reducing the delay on the receiver side from the start of reception until the feedback is sent. Hence, all other components contributing to the HARQ RTT will be considered fixed for this purpose. \ac{E-HARQ} schemes \cite{early_harq_schemes2,early_harq_schemes} provide the feedback on the decodability of the received signal ahead of the end of the actual transmission process. The crucial component in this setting is the classification algorithm that provides the feedback, which we aim to optimize using machine learning techniques. 

Earlier approaches addressing the feedback prediction problem with the sole exception of \cite{mlharq_workshop} focused exclusively on one-dimensional input features
as \ac{BER} estimates in combination with hard thresholding as classification algorithms \cite{early_harq_schemes2,early_harq_schemes}. In \cite{ldpc_subcodes}, authors introduced the so-called \ac{VNR}, as it will be discussed in Section~\ref{sec:features} below, to exploit the substructures of \ac{LDPC} codes for prediction. However, only a single feature, i.e.\ a single decoder iteration, in combination with hard thresholding has been used. We expect improvements in prediction accuracy 
by extensions in several directions in combination with more complex classification algorithms: (a) the evolution of input features through several decoder iterations considered for the first time in \cite{mlharq_workshop}, (b) higher-dimensional intra-message features 
that in the ideal case leverage knowledge about the underlying block code and (c) history features that leverage information
about the channel state from past submissions that is available at the receiver.

Here we significantly expand the approach put forward in \cite{mlharq_workshop}, where we discuss first E-HARQ results empowered by machine learning techniques. The training and testing is performed on simulated data obtained from stochastic channel models, which are widely used for performance evaluations of physical layer techniques \cite{3gpp_channel_models}. We present an extended theoretical discussion in particular including the extension to multiple retransmissions and a system model that incorporates scheduling effects for the system evaluation thereby allowing a much more precise evaluation of the performance of \ac{E-HARQ}-systems in realistic environments. On the classification side, this is supplemented by extended experiments including different input features and classification algorithms such as a newly developed supervised autoencoder for a larger range of SNR conditions, subcode lengths and different channel models.

The paper is organized as follows: In Section~\ref{sec:eharq} we review the E-HARQ feedback process and investigate 
the role of the classification algorithm in a simple probabilistic model and in a more realistic setting of limited 
system resources. In Section~\ref{sec:ml} we discuss machine learning approaches for the classification problem introducing 
different input features and algorithms. The classification performance as well as the system performance 
is evaluated in Section~\ref{sec:results} for different signal-to-noise ratios,
subcode lengths and channel conditions. We summarize and conclude in Section~\ref{sec:summary}.

\section{Early HARQ Feedback}\label{sec:eharq}
As discussed in the Introduction, \ac{E-HARQ} approaches aim to reduce the \ac{HARQ} \ac{RTT} by providing the feedback on the decodability of the received signal at an earlier stage. This enables the original transmitter to react faster to the current channel situation and to provide additional redundancy at an earlier point. In regular \ac{HARQ}, the feedback generation is strongly coupled to the decoding process. In particular, the receiver applies the decoder on the whole signal representing the total codeword. An embedded \ac{CRC} enables to check the integrity of the decoded bit stream. The result of this check is transmitted back as \ac{HARQ} feedback, either acknowledging correct reception (ACK) or asking for further redundancy (NACK). Providing early feedback (\ac{E-HARQ}) implies decoupling the feedback generation from the decoding process, which introduces a misprediction probability since the actual outcome is not known \replaced{in advance}{afore}. Although misprediction errors are not avoidable, the design choices for the prediction affect the system performance a lot, i.e. asking for more retransmissions than actually required (over-provisioning) or less (under-provisioning). The impact of this design choice is evaluated more in detail in Section~\ref{sec:finitesystemsize}. However, by taking this step, it is possible to use only a portion of the transmission and thus reducing the time from the start of the initial reception to transmitting the feedback ($T_1$). In total, the retransmission is scheduled earlier, hence also reducing the \ac{HARQ} \ac{RTT}, see Fig.~\ref{fig:harq_timeline}. The time for transmitting the feedback and receiving the retransmission ($T_2$) is not affected by this. For \ac{LDPC} codes, \ac{E-HARQ} can be realized under exploitation of the underlying code structure by investigating the feedback prediction problem on the basis of so-called subcodes \cite{ldpc_subcodes,early_harq_tdoc} from the parity-check matrix. These subcodes are constructed by choosing a subset of rows from the original parity-check matrix and all the associated columns, so-called variable nodes \cite{ldpc_subcodes}. The fraction of the subcode length to the full codelength with typical values ranging from 1/2 to 5/6, is designated as subTTI in Fig.~\ref{fig:harq_timeline}. Shorter subcode lengths reduce the \ac{RTT} but at the same time render the prediction problem more complicated.

\begin{figure}[!ht]
  \centering
  \ifone
  \includegraphics[width=.6\columnwidth]{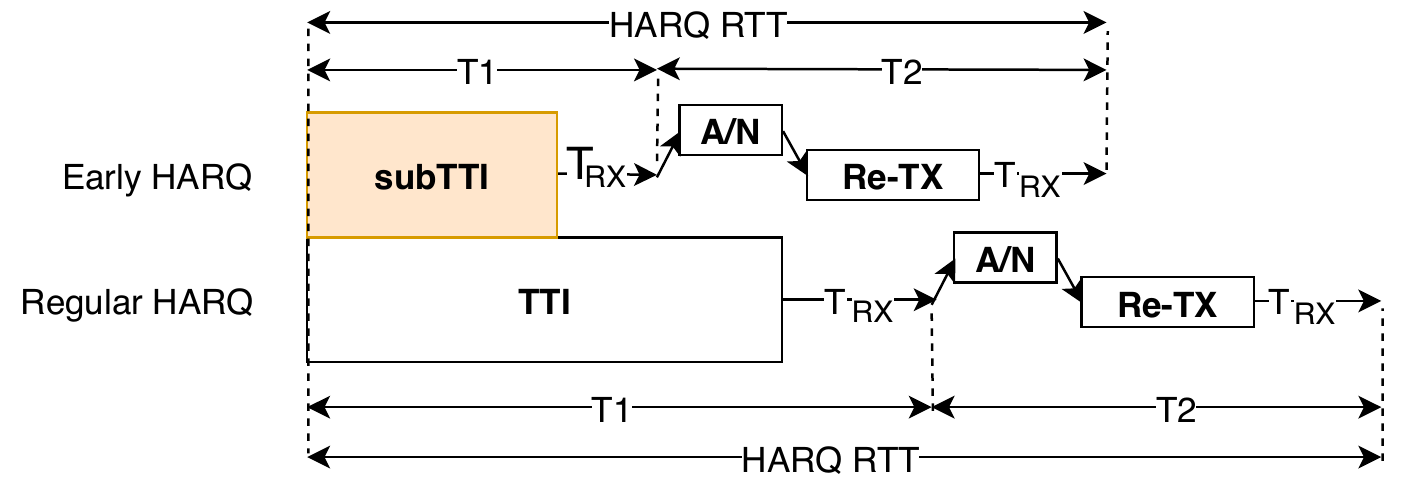}
  \else
  \includegraphics[width=.95\columnwidth]{harq_timeline2}
  \fi
  \caption{Timeline of regular HARQ compared to early HARQ.(HARQ RTT: HARQ round trip time; TTI: transmission time interval; $T_\text{RX}$: processing time at the receiver; A/N: ACK/NACK feedback transmission; Re-TX: retransmission; $T_1$: time from initial reception to feedback transmission $T_2$: time from transmission of feedback to the end of the processing of the retransmission at the receiver)}
  \label{fig:harq_timeline}
\end{figure}

In this section, we first introduce a simple probabilistic system model in Section~\ref{sec:probmodel} to provide an easy tool that evaluates the performance of the here presented \ac{E-HARQ} schemes. However, this model only provides a measure in means of the final \ac{BLER} and additionally implies the assumption of infinite resources by not penalizing unnecessary retransmissions. Hence, in Section~\ref{sec:systemmodel}, we provide a more realistic system model together with the analysis of implications of finite size systems in Section~\ref{sec:finitesystemsize}. This model provides a more suitable tool to evaluate the performance in practical systems, such as \ac{5G} and \ac{LTE}.
The finite-size system argument establishes an optimal point of operation for the \ac{E-HARQ} schemes that is specific for the available system resources and does not exist in a system with unlimited resources.

\subsection{Probabilistic model for single-retransmission \ac{E-HARQ}}
\label{sec:probmodel}

We analyze single-retransmission \ac{E-HARQ} in a simple probabilistic model. The straightforward extension to multiple retransmissions is discussed in 
Appendix~\ref{app:eharqmult}.

\begin{figure}[!ht]
  \centering
  \ifone
  \includegraphics[width=.35\columnwidth]{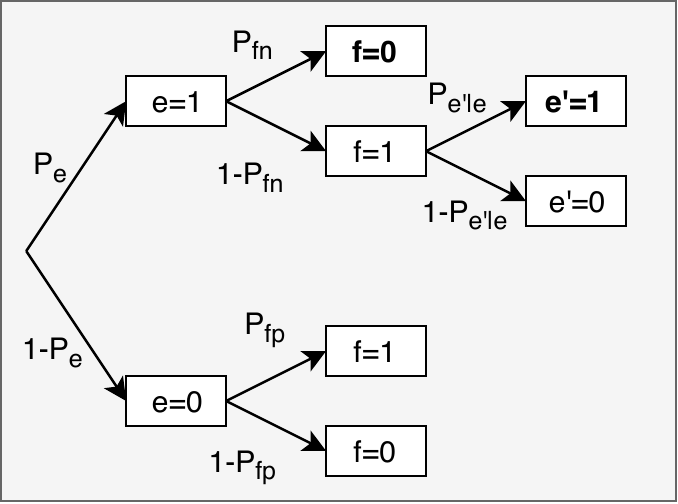}
  \else
  \includegraphics[width=.55\columnwidth]{tree_simple}
  \fi
  \caption{Probabilistic model for single-retransmission E-HARQ (terminal nodes in bold face lead to an effective block error). We use binary random variables $e/e'$ to reflect the state of the transmission ($e/e'=0$: no block error, $e/e'=1$: block error) and a binary random variable $f$ to quantify the feedback sent ($f=0$: \textit{ACK}, $f=1$: \textit{NACK}).}
  \label{fig:tree_left}
\end{figure}

The structure of the probabilistic model for E-HARQ is reflected in Figure~\ref{fig:tree_left}. After the 
initial transmission we end up in a block error state ($e=1$) with probability $P_{\mathrm{e}}\equiv P(e=1)$. Here we follow the common
scheme in imbalanced classification problems encoding the minority i.e.\ block
error class as positive, even though the opposite assignment is often used in the communications literature. In the case $e=0$ the codeword gets decoded correctly
irrespective of the feedback sent and a false positive feedback only implies an
unnecessary transmission, which has no effect on the performance under the infinite resources assumption. We use a binary random variable $f$ to reflect \textit{ACK} $(f=0)$ or \textit{NACK} $(f=1)$ feedback. In the former case we send either \textit{ACK} with
probability $P_{\mathrm{fn}} \equiv P(f=0|e=1)$, which leads to an effective block error, or \textit{NACK} with probability
$P(f=1|e=1)=1-P_{\mathrm{fn}}$. In the latter case the message gets retransmitted which leads to
an effective block error with probability $P_{\mathrm{e'|e}}=P(e'=1|e=1)$.
The value for $P_{\mathrm{e'|e}}$ crucially depends 
on the design of the feedback system most notably on the code rate used for the retransmission. However, one has to keep in mind that a decreased block error rate for the 
retransmission due to a decreased code rate might lead to latency losses due to the necessity of accommodating longer retransmissions. For identical retransmissions using an independent channel realization we would have $P_{\mathrm{e'|e}}= P_{\mathrm{e}}$ or even $P_{\mathrm{e'|e}}< P_{\mathrm{e}}$ if the decoder makes use of information from both transmissions for example using chase combining. For later reference we also define the joint probability $P_{\mathrm{e \wedge e'}}\equiv P_{\mathrm{e}}\cdot P_{\mathrm{e'|e}}=P((e=1)\wedge (e'=1))$. This simple argument leads to an
effective block error probability
\ifone
\begin{equation}
\label{eq:peff}
p_{\mathrm{BLE,eff,1}} = P_{\mathrm{e}}\cdot\left(P_{\mathrm{fn}} + (1-P_{\mathrm{fn}}) P_{\mathrm{e'|e}} \right)= P_{\mathrm{e}} P_{\mathrm{fn}}+ P_{\mathrm{e \wedge e'}} (1-P_{\mathrm{fn}})\,.
\end{equation}
\else
\begin{equation}
\begin{split}
\label{eq:peff}
p_{\mathrm{BLE,eff,1}} &= P_{\mathrm{e}}\cdot\left(P_{\mathrm{fn}} + (1-P_{\mathrm{fn}}) P_{\mathrm{e'|e}} \right)\\
&= P_{\mathrm{e}} P_{\mathrm{fn}}+ P_{\mathrm{e \wedge e'}} (1-P_{\mathrm{fn}})\,.
\end{split}
\end{equation}
\fi

The effect of an imperfect feedback channel could be easily incorporated in this formalism by defining effective false negative/positive rates but is omitted here for simplicity.
Empirically we can replace $P_{\mathrm{e}}$ and $P_{\mathrm{e \wedge e'}}$ by estimated block error rates and the
conditional probability $P_{\mathrm{fn}}$ by the classifier's false negative rate (FNR) as obtained from the confusion matrix. Obviously the lowest possible effective BLER is achieved for
perfect feedback, i.e.\ $P_{\mathrm{fn}}=0$, for which we have
$p_{\mathrm{BLE,eff,1}}=P_{\mathrm{e \wedge e'}}$. (\ref{eq:peff}) only depends on the baseline BLERs $P_{\mathrm{e}}$ and $P_{\mathrm{e \wedge e'}}$ and 
the classifier's false negative rate $P_{\mathrm{fn}}$ with leading order contribution given by $P_{\mathrm{fn}}\cdot P_{\mathrm{e}}$.
In the limit where the $P_{\mathrm{fn}} \ll P_{\mathrm{fb,e}}$ the leading behavior is just $P_{\mathrm{e}}\cdot P_{\mathrm{fb,e}}$ and hence independent of 
the classification performance.

Considering the question of latency, the simplest metric is to consider the
expected number of retransmissions $\langle \Delta T_1 \rangle$, which relates to the spectral efficiency of the approach. Therefore we evaluate 
the probability $P_{\mathrm{r,1}}$ for a single retransmission. Again using 
Figure~\ref{fig:tree_left}, we obtain
\begin{equation}  
P_{\mathrm{r,1}}\equiv P_{\mathrm{r}} =P_{\mathrm{e}}(1-P_{\mathrm{fn}}) + (1-P_{\mathrm{e}}) P_{\mathrm{fp}}\,.
\label{eq:pr1}
\end{equation} 
As above, the conditional probability $P_{\mathrm{fp}}\equiv P(f=1|e=0)$ can be identified empirically with the classifier's false positive rate (FPR). 
The leading order contribution to (\ref{eq:ETr}) is given by $P_{\mathrm{e}}+P_{\mathrm{fp}}$ and the number of expected retransmissions 
therefore profits from a decreased FPR. For the case of a single retransmission, the expected number of retransmissions $\langle \Delta T_1 \rangle$
coincides with the single-retransmission probability, 
\begin{equation}
\label{eq:ETr}
\langle \Delta T_1 \rangle= P_{\mathrm{r,1}}\,.
\end{equation}
These results already hint at the crucial importance of adjusting the classifier's working point by balancing FNR versus FPR: A reduction of the FNR leads to a smaller effective block error probability, see (\ref{eq:peff}), but 
comes along with an increased FPR, as the two kinds of classification errors counterbalance each other. This in turn leads 
to an increase in latency, see (\ref{eq:ETr}). From the present discussion it might seem a reasonable strategy to target an arbitrarily small FNR such that the 
effective block error probability approaches the theoretical limit. However, this argument only holds for a system with unlimited resources,
as will be discussed below.
\subsection{System model}
\label{sec:systemmodel}
In order to derive a tool for evaluation of the performance of the discussed predictors,  we introduce in this section a more sophisticated system model that leans on the structure of today's mobile network technologies. In cellular networks, such as \ac{LTE} and \ac{5G}, \ac{OFDMA} has been established due to its scheduling flexibility. Especially, opportunistic scheduling allows to use the best possible channel for a transmission. Here, we assume a simplified \ac{OFDMA} system with equally sized $N_{\mathrm{res}}$ resources, i.e.\ frequency resources and a defined transmission duration in time, so-called \ac{TTI}. The \ac{HARQ} mechanism, regular \ac{HARQ} as well as \ac{E-HARQ}, requests based on the received parts of the transmission a retransmission, which is scheduled at earliest after $T_{\mathrm{RTT}}$ time slots. 

The main advantage of \ac{E-HARQ} over regular \ac{HARQ} is the reduced \ac{HARQ} \ac{RTT}. Hence, depending on the latency budget more \ac{HARQ} iterations, i.e. more re-/transmissions each incorporating feedback from the receiver within the latency constraint, might be used to improve the system performance. In the following, we assume the \ac{TTI} length the default time unit. In practical systems, there exist several possibilities to design the \ac{TTI} length such that it fits the requirements. The number of \ac{OFDM} symbols belonging to a \ac{TTI} or the subcarrier spacing of the \ac{OFDM} modulation can be  changed, such that the physical time duration of a \ac{TTI} scales accordingly. In this work, we evaluated two different system approaches, long and short \ac{TTI} lengths.

The \ac{HARQ} \ac{RTT} is mainly comprised by the processing time, which scales with the \ac{TTI} length in general \cite{takeda2017latency}, and the time required for transmitting the feedback, which does not depend on the \ac{TTI} length. Thus, for long \ac{TTI} lengths this time can be considered relatively small. However, for short \ac{TTI} lengths this constant component has to be considered for \ac{E-HARQ} as well as for regular \ac{HARQ} systems. Hence, for long \acp{TTI}, we assumed $T_{\mathrm{RTT}} = 1$ for rate-1/2 \ac{E-HARQ}, which means that the retransmission is received in the next TTI, and $T_{\mathrm{RTT}} = 2$ for regular \ac{HARQ}, so that for regular \ac{HARQ} one TTI has to be skipped. Analogously, for short \acp{TTI}, $T_{\mathrm{RTT}} = 5$ for rate-5/6 \ac{E-HARQ} and $T_{\mathrm{RTT}} = 6$ for regular \ac{HARQ}. 
Depending delay constraint $T_\mathrm{c}$ this results to a maximum number of retransmissions possible within the said latency budget.
For long and short \acp{TTI} this allows depending on the system load up to two retransmissions in the \ac{E-HARQ}-scheme compared to only one in the regular \ac{HARQ}-scheme. Due to the scalability of the \ac{TTI} length, the absolute value of $T_c$ might be set to an arbitrary value, e.g.\, \unit[1]{ms}.
Thanks to the \replaced{previously mentioned}{aforementioned} opportunistic scheduling possibilities of \ac{OFDMA}, we assume that the retransmission is independent of the previous transmission, i.e. $P_{\mathrm{e'|e}} = P_{\mathrm{e}}$ and the total \ac{BLER} $P_{\mathrm{e,total}} = (P_{\mathrm{e}})^{n+1}$, where $n$ is the number of retransmissions. Furthermore, an i.i.d.\ packet arrival rate $P_{\mathrm{A,UE}}$ for each \ac{UE} is assumed. Thus, a single \ac{UE} can only have one new transmission per time slot. The arrival rate for the high load scenario, as given in Table~\ref{tab:sys_param}, has been chosen such that the \ac{E-HARQ} performance saturates in the \ac{URLLC} relevant range. For the medium load scenario, the arrival rate has been chosen slightly lower such that no saturation behavior is observable in the relevant range. For simplicity the following argument is carried for a perfect feedback channel, i.e.\ for $P_{\mathrm{fb,e}}=0$, which is a reasonable assumption considering the results of the previous implying that the feedback error probability is at most of subleading importance. The system parameters are summarized in Table~\ref{tab:sys_param}.
\ifone
\begin{table}[ht]
\centering
\caption{System Evaluation Parameters}
\label{tab:sys_param}
\begin{tabular}{l|l}
Base time unit&\ac{TTI} (e.g., for \unit[1]{ms}: long \ac{TTI} - $1 \text{TTI} \hat{=} \frac{1}{3} \text{ms}$, short \ac{TTI} - $1 \text{TTI} \hat{=} \frac{1}{11} \text{ms}$) \\
\hline
\ac{UE} packet arrival rate - $P_{\mathrm{A,UE}}$&medium load - 0.3, high load - 0.36\\
\hline
Number of \acp{UE} - $N_{\mathrm{UE}}$&20\\
\hline
Number of resources per time slot - $N_{\mathrm{res}}$ &10\\
\hline
Delay constraint - $T_{\mathrm{c}}$&long \ac{TTI} - 3, short \ac{TTI} - 11\\
\hline
long \ac{TTI} \ac{HARQ} \ac{RTT} - $T_{\mathrm{RTT}}$&1 (\ac{E-HARQ} 1/2), 2 (regular \ac{HARQ})\\
\hline
short \ac{TTI} \ac{HARQ} \ac{RTT} - $T_{\mathrm{RTT}}$&5 (\ac{E-HARQ} 5/6), 6 (regular \ac{HARQ})\\
\hline
\ac{BLER} of (re-)transmissions - $P_{\mathrm{e}}$& as in Table~\ref{tab:basic_vnr}\\
\end{tabular}
\end{table}
\else
\begin{table}[ht]
\centering
\label{tab:sys_param}
\caption{System Evaluation Parameters}
\begin{tabular}{l|l}
\ac{UE} packet arrival rate - $P_{\mathrm{A,UE}}$&medium load - 0.3,\\&high load - 0.36\\
\hline
Number of \acp{UE} - $N_{\mathrm{UE}}$&20\\
\hline
Number of resources &10\\
per time slot - $N_{\mathrm{res}}$&\\
\hline
Delay constraint - $T_{\mathrm{c}}$&long symbols - 3,\\&short symbols - 11\\
\hline
long \ac{TTI} \ac{HARQ} \ac{RTT} - $T_{\mathrm{RTT}}$&1 (\ac{E-HARQ} 1/2),\\
&2 (regular \ac{HARQ})\\
\hline
short \ac{TTI} \ac{HARQ} \ac{RTT} - $T_{\mathrm{RTT}}$&5 (\ac{E-HARQ} 5/6),\\
&6 (regular \ac{HARQ})\\
\hline
\ac{BLER} of (re-)transmissions - $P_{\mathrm{e}}$& as given in Table~\ref{tab:basic_vnr}\\
\end{tabular}
\end{table}
\fi
\subsection{Implications of finite system size}
\label{sec:finitesystemsize}
In practical systems, there is a trade-off between the \ac{FNR} and \ac{FPR} due to the limited amount of available resources. Whereas a lower \ac{FNR} increases the effective \ac{BLER}, as shown in the Section~\ref{sec:probmodel}, it increases the transmission overhead on the other hand. Depending on the available resources this leads to resource shortage, also causing additional delays since transmissions cannot be scheduled in the designated time slots. This brings us to the term of packet failure rate which is described by the probability that a packet is delivered successfully within a given delay constraint, i.e.\ latency budget \added{ $T_\mathrm{c}$}. Interestingly, there is an optimal operation point which captures the trade-off such that the packet failure rate is minimized. 

For the assumptions on the system model described in the previous section, the packet failure probability is given as
\begin{equation}\label{eq:pfr}
P_{\mathrm{pf}} = (1-P_{\mathrm{\added{S}},0}) + P_{\mathrm{\added{S}},0} P_{\mathrm{e}}P_{\mathrm{H,e},1}\,,
\end{equation}
where $P_{\mathrm{S},j} \equiv P(T_{j} \leq T_\mathrm{c})$ denotes the probability of scheduling $j$ transmissions within the time constraint $T_c$ and $P_{\mathrm{H,e},j}$ denotes the failure probability after the $j$th transmission. For a single retransmission ($n=1$) the latter is given by $P_{\mathrm{H,e},1}=P_{\mathrm{fn}}+(1-P_{\mathrm{fn}})[(1-\tfrac{{P_{\mathrm{S},1}}}{{P_{\mathrm{S},0}}})+\tfrac{P_{\mathrm{S},1}}{P_{\mathrm{S},0}} P_e]$. Inserting this expression into (\ref{eq:pfr}) leads to the familiar form (\ref{eq:peff}) up to scheduling probabilities $P_{\mathrm{S},j}$. Generalizing to multiple retransmissions, the error probability for the $j$th resubmission can be defined recursively via 
\begin{equation}
P_{\mathrm{H,e},j} =
P_{\mathrm{fn}} + (1- P_{\mathrm{fn}})\left[\left(1-\tfrac{P_{\mathrm{S},j}}{P_{\mathrm{S},j-1}}\right)+ \tfrac{P_{\mathrm{S},j}}{P_{\mathrm{S},j-1}} P_{\mathrm{e}}P_{\mathrm{H,e},j+1}\right],
\end{equation}
where we set $P_{\mathrm{H,e},j}=1$ if $j$ exceeds the maximum number of retransmissions, i.e.\ if $j>n$. As shown above for a single retransmission, (\ref{eq:pfr}) reduces to (\ref{eq:peff}) if one sets all scheduling probabilities to one. The same applies for multiple retransmissions for which one obtains (\ref{eq:peffn}). Therefore, (\ref{eq:pfr}) can be seen as a generalized version of the effective \ac{BLER}.

However, the effective \ac{BLER} does not consider the finite resources and thus cannot capture the actual performance of the evaluated \ac{HARQ} schemes in a practical implementation. We will refer to this case as the infinite resource baseline compared to the finite resource baselines discussed below.

At first glance, (\ref{eq:pfr}) suggests minimizing the FNR $P_\mathrm{fn}$. However, a closer examination reveals that the scheduling probabilities $P_{\mathrm{S},j}$ carry a dependence on both FNR and FPR via the underlying resource distribution function, see Appendix~\ref{sec:sched}. FNR and FPR counteract each other in the sense that a decreased FNR will lead to an increase in the FPR.
Considering the dependency on the resource distribution function, an increase of the FPR $P_\mathrm{fp}$ increases the load on the system, thus lowers the probability that a transmission and its retransmission is scheduled within the time constraint. This fact is already apparent from the expected number of retransmission as obtained in (\ref{eq:ETr}) which scales with the FPR at leading order. This suggests that the packet failure probability seen as a function of the FNR will show a minimum characterizing an optimal trade-off between FNR and FPR for the given system resources.

The derived packet failure probability $P_\mathrm{pf}$ within a fixed latency budget rather than spectral efficiency represents the most relevant performance metric for practical evaluations. Additionally, apart from comparing the different \ac{E-HARQ} schemes among each other, it enables a performance comparison with regular \ac{HARQ}, which is crucial if \ac{E-HARQ} is considered for \ac{URLLC}. Here, aside the system setup presented in the previous section, for regular \ac{HARQ} the \ac{FNR} and \ac{FPR} is assumed to be zero as false predictions can be neglected due to the \ac{CRC} included in the transmission. 

\section{Machine Learning for early HARQ}
\label{sec:ml}
The machine learning task of predicting the decodability of a message based on information from at most the first few decoder iterations is an inherently imbalanced classification problem.
This imbalance is a direct consequence of the base BLERs of the order $10^{-3}$ that
are required in order to be able to reach effective BLERs of the order $10^{-5}$, see (\ref{eq:peff}).
Different ways of dealing with this imbalance have been explored, see 
\cite{DBLP:journals/corr/BrancoTR15} for a review. These can be categorized as 
cost-sensitive learning, rebalancing techniques and threshold moving. The discussion 
in this section focuses on the latter 
in the sense of readjusting the decision boundary of any trained model that 
outputs probabilities for the predicted classes, see also \cite{collell2016reviving} and references therein. 

By moving the decision boundary one is 
able to investigate the discriminative power of a given classifier over a whole range of different working points. This is typically analyzed in terms of Receiver-Operation curves (ROC) or Precision-Recall (PR) curves.
In order to summarize the classifier's performance with a single number, one conventionally
resorts to reporting area-under-curve (AUC) metrics. Here we focus on the PR curve and the corresponding 
area under the PR curve, AUC-PR, rather than the ROC-curve 
as the former has been shown to better reflect the classifier's performance
for highly skewed datasets \cite{DBLP:journals/corr/abs-1801-03149,Davis:2006:RPR:1143844.1143874}. 
However, when summarizing the discriminative power of a classifier using a single figure, one loses fine-grained information about classification performance at different working points. This is particularly true since the full AUC naturally covers the whole range values for the decision boundary, many of which are irrelevant for practical applications where the classification performance in the small FNR-regime is most relevant. In addition, the actual implementation of the classifier requires a definite choice for the decision threshold. Therefore we supplement the global AUC-PR information with an analysis based on FNR-PPR curves. It is worth noting that the FNR-FPR curves directly relate to ROC curves since the true positive rate TPR that is plotted on the ordinate of the ROC-curve relates to the FNR via TPR = 1 - FNR. FNR and FPR represents the natural choice in our case since they represent the key output figures from the system point of view, see Sec~\ref{sec:probmodel}.

\subsection{Input features}
\label{sec:features}
We distinguish single-transmission-features derived from a single transmission and history information 
from past transmissions. In principle all these features can be combined at will to form the set 
of input features for the classification algorithm. 

The raw data for a single transmission provided by the simulation is given by (a posteriori) LLR values 
after different decoder iterations.
\ac{E-HARQ} approaches to reduce the \ac{HARQ} \ac{RTT} have been first discussed in \cite{early_harq_schemes2} and \cite{early_harq_schemes}. This approach estimates the \ac{BER} based on the \acp{LLR} and utilizes a hard threshold to predict the decodability of the received signal. The \ac{LLR} gives information on the likelihood of a bit being either $1$ or $0$. Denoting $\boldsymbol{y}$ as the observed sequence at the receiver, the LLR of the $k^{th}$ bit $b_k$ is defined as\deleted{:}
\begin{equation}
\label{eq:llr}
L(b_k) = \log\frac{P(b_k = 1 | \boldsymbol{y})}{P(b_k = 0 | \boldsymbol{y})}\,.
\end{equation}
Having the \acp{LLR} of a subcode or the whole codeword allows to calculate an estimated \ac{BER} for the received signal vector via 
\begin{equation}
\label{eq:ber_llr}
\hat{\mathit{BER}} = \frac{1}{M}\sum_k \frac{1}{1 + |L(b_k)|}\, ,
\end{equation}
where $M$ is the length of the \ac{LLR} vector. Based on this metric the decoding outcome is predicted, where a higher $\hat{\mathit{BER}}$ means a lower probability of successful decoding.

A further improved approach has been presented in \cite{ldpc_subcodes} and \cite{early_harq_tdoc}. The authors propose to exploit the code structure to improve the prediction performance. In case of \ac{LDPC} codes, this is realized by constructing so-called subcodes from the parity-check matrix. Using a belief-propagation based decoder on the \acp{LLR} of the subcodeword results in a posteriori \acp{LLR}:
\begin{equation}
\label{eq:ap_llr}
\Lambda^{(j)}_k = \Lambda^{(j-1)}_k  + \sum_{m \in \mathcal{M}(k)} \beta^{(j)}_{m,k} ,
\end{equation}
where $\Lambda^{(0)}_k\equiv L(b_k)$, $\mathcal{M}(k)$ is the set of check nodes which are associated to the variable node of $k$ and $\beta^{(j)}_{m,k}$ is the check-to-variable node message from check node $m$ to variable $k$. Here we use the superscript $j$ in $\Lambda^{(j)}_k$ to denote the decoder iteration after which the posteriori \acp{LLR} were extracted with the obvious identification $\Lambda^{(0)}_k\equiv L(b_k)$. Again, the a posteriori \acp{LLR} are mapped to the same metric for each belief-propagation iteration, designated as \ac{VNR},
\begin{equation}
\label{eq:rv}
\mathit{VNR}_j = \frac{1}{M}\sum_i \frac{1}{1 + |\Lambda^{(j)}_i|}\, ,
\end{equation}
where $M$ is the length of the subcodeword and $j$ denotes the belief-propagation iteration. Hence, $\mathit{VNR}_0$ corresponds to $\hat{BER}$. In \cite{ldpc_subcodes}, the authors used a hard threshold applied $\mathit{VNR}_5$ to predict decodability.

Assuming the receiver is operating on the same channel across different transmissions, it might be possible to increase the prediction performance 
by incorporating information from previous transmissions. This includes all features used as single-transmission features and in addition features that 
are only available after the end of the decoding process. As two representative examples for history features we investigate VNRs from past submissions (VNR\_HIST)
and information about the Euclidean distance between the correct codeword and the received signal vector (EUCD\_HIST). Here one has to 
keep in mind that the latter information is only available if the correct codeword is known to the receiver as for example from a previous pilot transmission but 
strictly speaking it cannot be reliably obtained from an ordinary previous transmission as even a correct CRC does not imply a correctly decoded transmission. For a given set of history features we consider means of the history features under consideration extracted from different numbers of past transmissions using sliding windows of length 1,2,5 and 9 in order to allow the classifier to extract information 
from past channel realizations at different time scales.

\subsection{Classification algorithms}
\label{sec:algorithms}
As discussed in the introduction, we can view the problem either as a heavily imbalanced classification problem or as an anomaly detection problem. Here we briefly discuss suitable algorithms for both of approaches. As examples for binary classification algorithms we consider hard threshold (HT) classifiers,  logistic regression (LR) (with $L_2$ regularization and balanced class weights) and Random Forests (RF). HT applied to $\mathit{VNR}_0$/$\mathit{VNR}_5$-data (referred to as HT0 and HT5 in the following) yield the classifiers used in the literature so far \cite{early_harq_schemes2,ldpc_subcodes}. For anomaly detection \cite{Chandola:2009:ADS:1541880.1541882} one distinguishes unsupervised, semi-supervised and supervised approaches depending on whether only unlabeled examples, only majority-class examples or labeled examples from both classes are available for training. As anomaly detection algorithms we consider Isolation Forests (IF) \cite{liu2008isolation} as classical tree-based semi-supervised anomaly detection algorithm and supervised autoencoder (SAE) as a novel neural-network based approach for supervised anomaly detection, see Appendix~\ref{sec:SAE} for details. All classifiers apart from HT0 and HT5 operate on the first six VNRs $\mathit{VNR}_0,\ldots,\mathit{VNR}_5$ as input features.
We leverage the implementations from scikit-learn \cite{scikit-learn} apart from SAC that was implemented in PyTorch \cite{paszke2017automatic}.

\section{Results}
\label{sec:results}

\subsection{Simulation setup}
\ifone
\begin{table}[t]
\centering
\caption{Link-level simulation assumptions for training and test set generation.}
\label{tab:lls_assump}
\begin{tabular}{l|l}
Transport block size & 360 bits \\
\hline
Channel Code&Rate-1/5 LDPC BG2 with Z = 36, see \cite{5g_channel_coding_spec}\\
\hline
Modulation order and algorithm&QPSK, Approximated LLR\\
\hline
Waveform&3GPP OFDM, 1.4 MHz, normal cyclic-prefix\\
\hline
Channel type&1~Tx 1~Rx, TDL-C 100~n~, 2.9~GHz,\\
&3.0~km/h (pedestrian) or 100.0~km/h (vehicular)\\
\hline
Equalizer&Frequency domain MMSE\\
\hline
Decoder type&Min-Sum\\
\hline
Decoding iterations&50\\
\hline
VNR iterations&5\\
\end{tabular}
\label{tab:SIM}
\end{table}
\else
\begin{table}[t]
\centering
\caption{Link-level simulation assumptions for training and test set generation.}
\label{tab:lls_assump}
\begin{tabular}{l|l}
Transport block size & 360 bits \\
\hline
Channel Code&Rate-1/5 LDPC BG2 with Z = 36,\\
&see \cite{5g_channel_coding_spec}\\
\hline
Modulation order and algorithm&QPSK, Approximated LLR\\
\hline
Waveform&3GPP OFDM, 1.4 MHz,\\
&normal cyclic-prefix\\
\hline
Channel type&1~Tx 1~Rx, TDL-C 100~n~, 2.9~GHz,\\
&3.0~km/h (pedestrian) or\\
&100.0~km/h (vehicular)\\
\hline
Equalizer&Frequency domain MMSE\\
\hline
Decoder type&Min-Sum\\
\hline
Decoding iterations&50\\
\hline
VNR iterations&5\\
\end{tabular}
\label{tab:SIM}
\end{table}
\fi
We compare classification performance of different classifiers based on AUC-PR and
FNR-FPR curves. As external parameters we vary the SNR between 3.0 and 4.0 dB which results to \acp{BLER} in the considered \ac{URLLC} regime taking the \ac{HARQ} retransmissions into account, and subcode lengths between
1/2 and 5/6 to evaluate a rather aggressive prediction versus a more conservative one. The simulation setup used to produce training and test data follows the
one reported in \cite{ldpc_subcodes} and summarized in Table~\ref{tab:lls_assump}.
We use the raw simulation output as well as a number of derived features. Here we consider both 
single-transmission features as well as history-features that incorporate information from a number 
of past transmissions, see Appendix~\ref{sec:features} for a detailed discussion. We then investigate the 
performance of a number of classification algorithms operating on these input features, see Appendix~\ref{sec:algorithms} 
for a detailed breakdown.
In all cases we use 1M transmissions with independent channel realizations for training and evaluate on 
a test set comprising at least 1M transmissions. The size of the test set for each SNR/subcode combination is given in the second column of Table~\ref{tab:basic_vnr}. Hyperparameter tuning is performed once for the pedestrian channel (at SNR 4.0~dB and subcode length 5/6) on an additional validation set also comprising 1M samples. We standard-scale all different sets of input features 
independently using training set statistics. In this way we obtain a reasonable input normalization that is 
required for certain classification algorithms 
while keeping relative difference within different input feature groups intact.

\subsection{Classification performance}
\label{sec:vnrfeatures}

\begin{table*}[ht]
\centering
\caption{Comparing classification performance based on AUC-PR (classifiers as specified in Section~\ref{sec:algorithms}).}
\begin{tabular}{c||c|c||c|c||c|c||c|c}
SNR SC ch&\#train/\#test&BLER & HT0 & HT5& LR & RF & IF & SAE \\
\hline
\hline
4.0dB 5/6 ped &1M/3M& 0.001604 & 0.811 & 0.902 & 0.905 &0.907 & 0.890 & \textbf{0.908}\\
4.0dB 1/2 ped &1M/4M&0.001626 & 0.801 &0.799 & \textbf{0.834} & 0.832 & 0.827 & \textbf{0.834} \\
\hline
3.5dB 5/6 ped &1M/1M& 0.002841 & 0.844& 0.920 & 0.921 & 0.924& 0.912 & \textbf{0.926}\\
3.5dB 1/2 ped &1M/4M& 0.002777 & 0.821 & 0.814&\textbf{0.847} &0.846 & 0.839 & \textbf{0.847}\\
\hline
3.0dB 5/6 ped &1M/1.5M& 0.004742  & 0.863& 0.927 &\textbf{0.934} &\textbf{0.934} & 0.923 & \textbf{0.934}\\
3.0dB 1/2 ped &1M/1.5M& 0.004742 & 0.851 & 0.840  & 0.872 &0.871 & 0.865 &\textbf{0.874}\\
\hline
\hline
3.5dB 1/2 veh &1M/3M& 0.002866  & 0.824& 0.818 &\textbf{0.851} &0.850 & 0.846 &\textbf{0.851}\\
\end{tabular}
\label{tab:basic_vnr}
\end{table*}

We start by discussing the classification performance for different classification algorithms based on $\mathit{VNR}$-features extending the analysis from \cite{mlharq_workshop}. The classification results are compiled in Table~\ref{tab:basic_vnr}. We compare AUC-PR that characterizes the overall discriminative power of the algorithm and which tends to 1 for a perfectly discriminative classifier. The largest improvements to the simplest thresholding method HT0 is seen for longer subcode lengths such as 5/6. In these cases more complex classification methods applied to the full VNR-range show only small improvements over the HT5 threshold baseline. A different picture emerges at smaller subcode lengths. Here using VNRs from higher decoder iterations (HT5) does not improve or even worsen the classification performance compared to HT0. Here more complex classification algorithms show their true strengths and show larger improvements compared to HT0/HT5. This is a plausible result since decreasing the subcode length renders the classification problem more complicated and more complex classifiers can profit more from this complication. If we assess the difficulty of the classification problem based on the scores achieved by the classifiers, a clear picture emerges: As discussed before decreasing the subcode length for fixed SNR renders the classification problem more difficult, whereas decreasing the SNR for fixed subcode length has the opposite effect most notably because of an increasing BLER. On the other hand the BLER sets the baseline for the HARQ performance, see (\ref{eq:peff}), which overcompensates the positive effects of the improved classification performance. The overall best discriminative power across different SNR-values, subcode lengths and channel conditions shows the supervised autoencoder closely followed by regularized logistic regression. The fact that the AUC-PR results for LR, RF and SAE are so close just reflects a similar overall discriminative power of these algorithms despite of fundamentally different underlying principles.

\begin{figure*}[ht]
\centering     
\subfigure[4.0 dB subcode 5/6]{\label{fig:vnr40_56}\includegraphics[width=.45\textwidth]{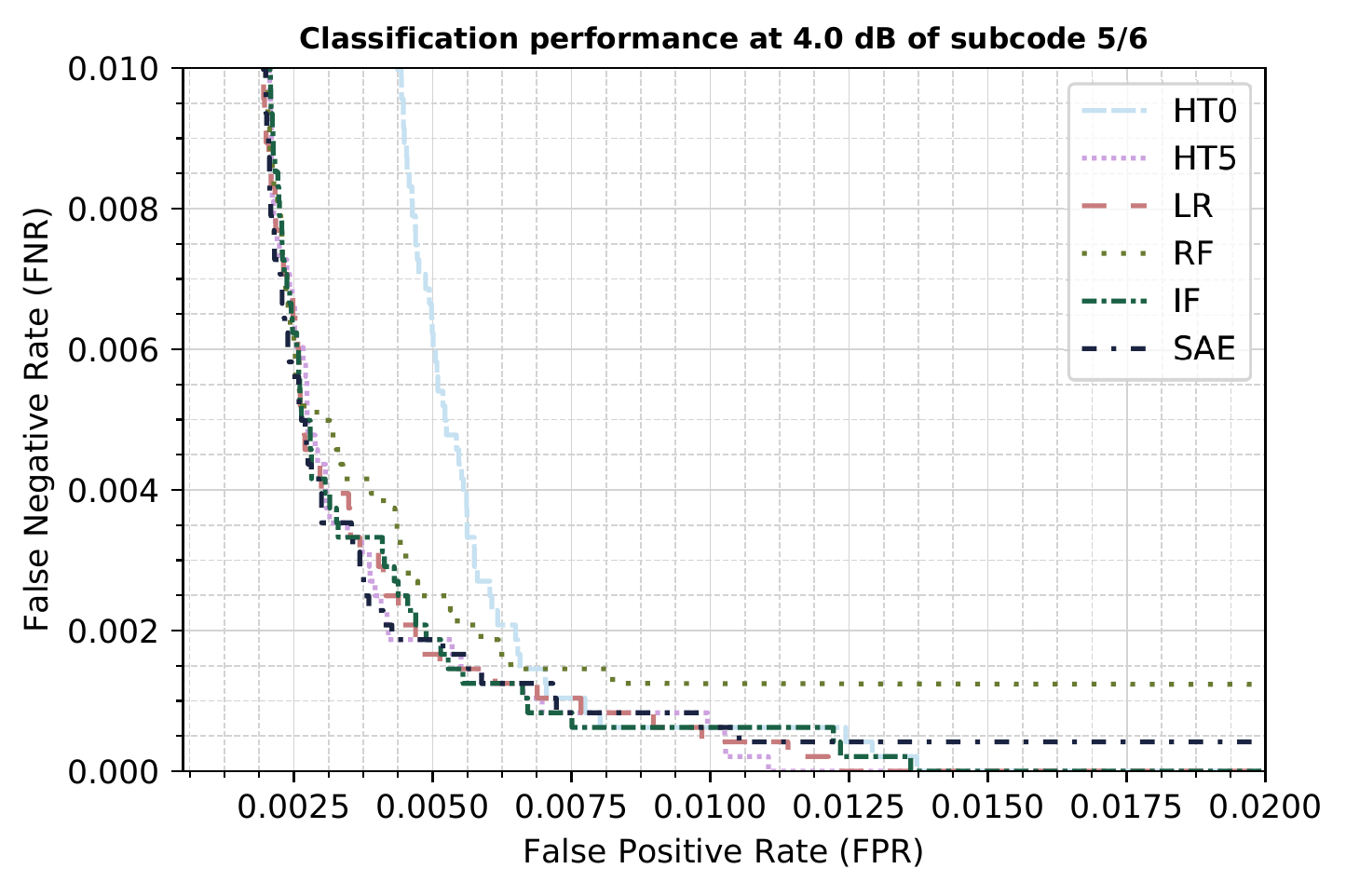}}
\subfigure[4.0 dB subcode 1/2]{\label{fig:vnr40_12}\includegraphics[width=.45\textwidth]{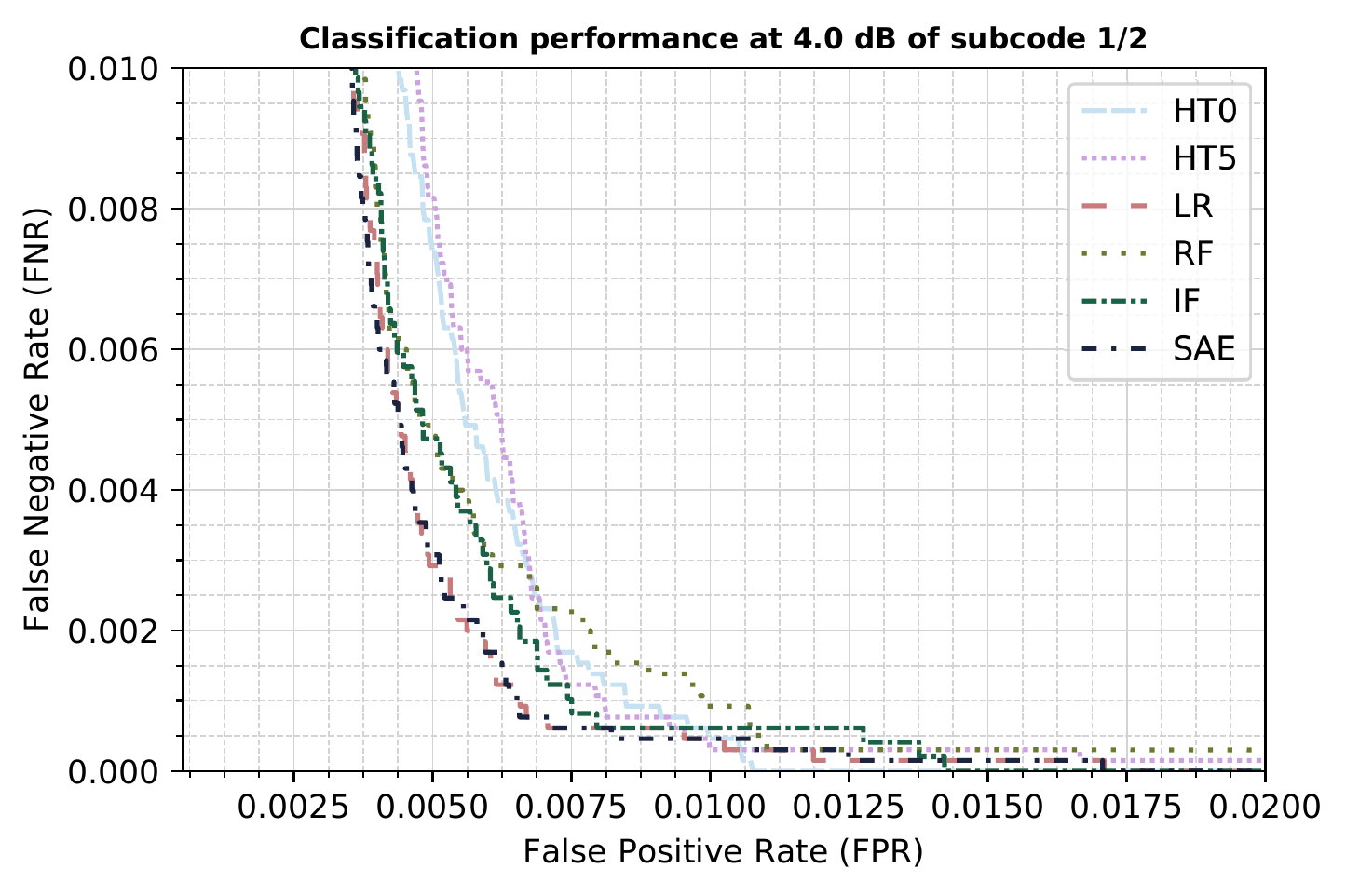}}
\caption{Selected examples for classification performance based on VNR-features in the pedestrian channel.}
\label{fig:basic_vnr}
\end{figure*}

This does, however, not imply coinciding FNR-FPR curves, where the classifiers show rather different behavior in certain FNR regions, see Figure~\ref{fig:basic_vnr} for selected results. Random Forests, for example, show in general a very good overall performance but are considerably weaker than other classifiers in the small FNR-regime. 
When looking at FNR-FPR curves as the ones presented in Figure~\ref{fig:basic_vnr}, one has to keep in mind that it is very difficult in the extremely imbalanced regime to obtain reliable estimates of the FNR as both the numerator (false negatives) and the denominator (sum of false negatives and true positives) are small numbers requiring large sample sizes for a stable evaluation. This applies in particular to the region of small FNRs below 0.001.

To summarize, we clearly demonstrated that incorporating the evolution of the VNR across the first five decoder iterations into more complex classification algorithms such as logistic regression or supervised autoencoders leads to gains in the overall classification performance in particular in comparison to hard threshold baselines. This conclusion holds for various SNR-values, subcode lengths and channel conditions. Implications of these findings for the system performance will be discussed in Section~\ref{sec:systemperformance}.

We restrict the investigation of history features to the SAE classifier as the best-performing classifier from the previous section. However, we checked that the qualitative conclusions about the importance of history features hold irrespective of the classification algorithm under consideration. In Table~\ref{tab:history_features} we discuss the impact of history features on the classification performance in addition to the VNR-features discussed above. 

\begin{table}[ht]
\centering
\tabcolsep 3pt
\caption{Comparing classification performance based on AUC-PR upon including history features (for SAE).}
\begin{tabular}{c||c|c|c}
features& 4.0dB 1/2 ped &  3.5dB 1/2 ped & 3.5dB 1/2 veh\\
\hline
\hline
VNR&0.834&0.847&0.851\\
\hline
VNR+VNR\_HIST&0.860&0.872&0.852\\
VNR+EUCD\_HIST&\textbf{0.883}&\textbf{0.892}&\textbf{0.861}\\
\end{tabular}
\label{tab:history_features}
\end{table}

\begin{figure*}[!ht]
  \centering     
  \subfigure[4.0 dB subcode 5/6]{\label{fig:vnr40_56_perr}\includegraphics[width=.45\textwidth]{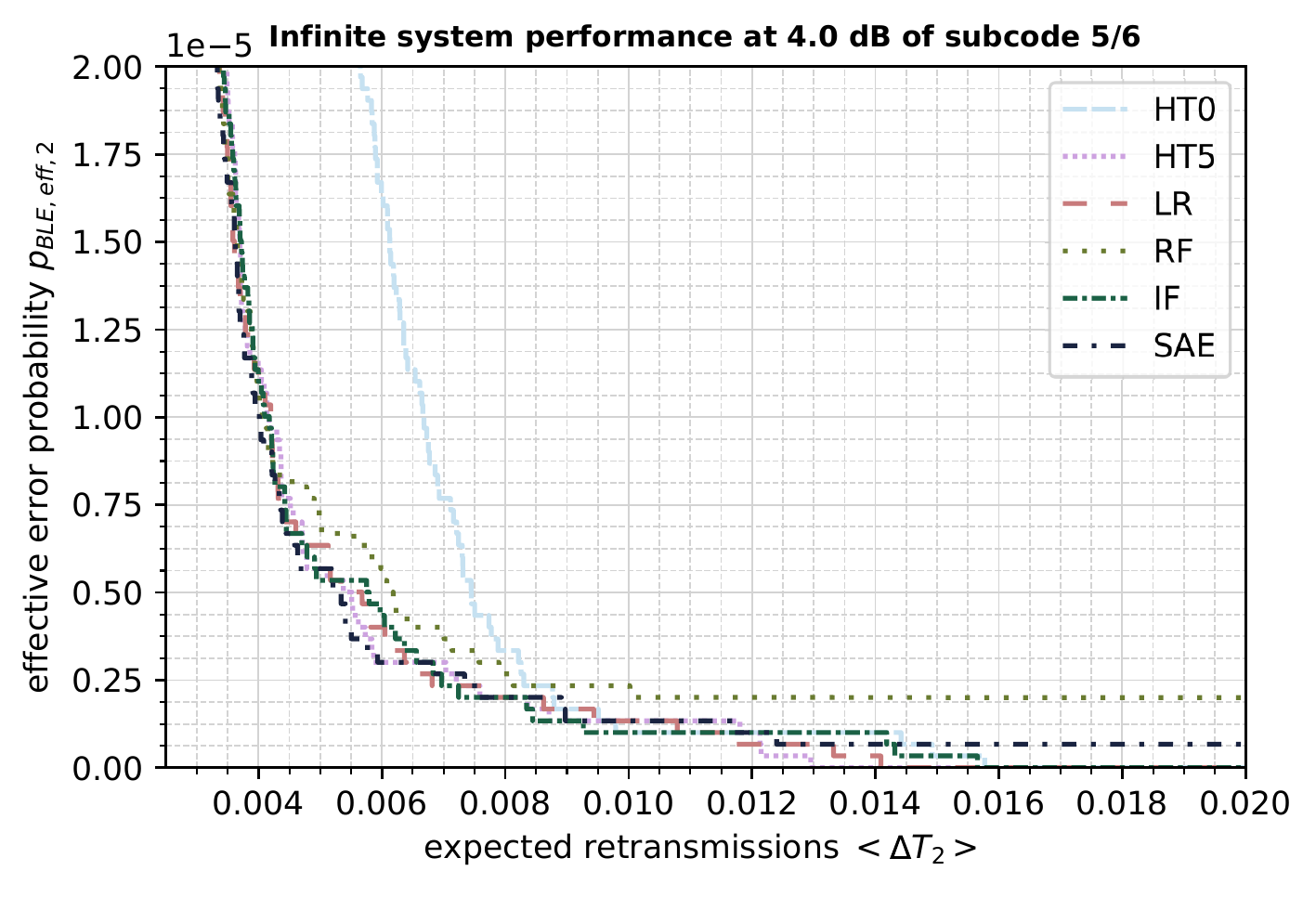}}
  \subfigure[3.5 dB subcode 1/2]{\label{fig:vnr35_12_perr}\includegraphics[width=.45\textwidth]{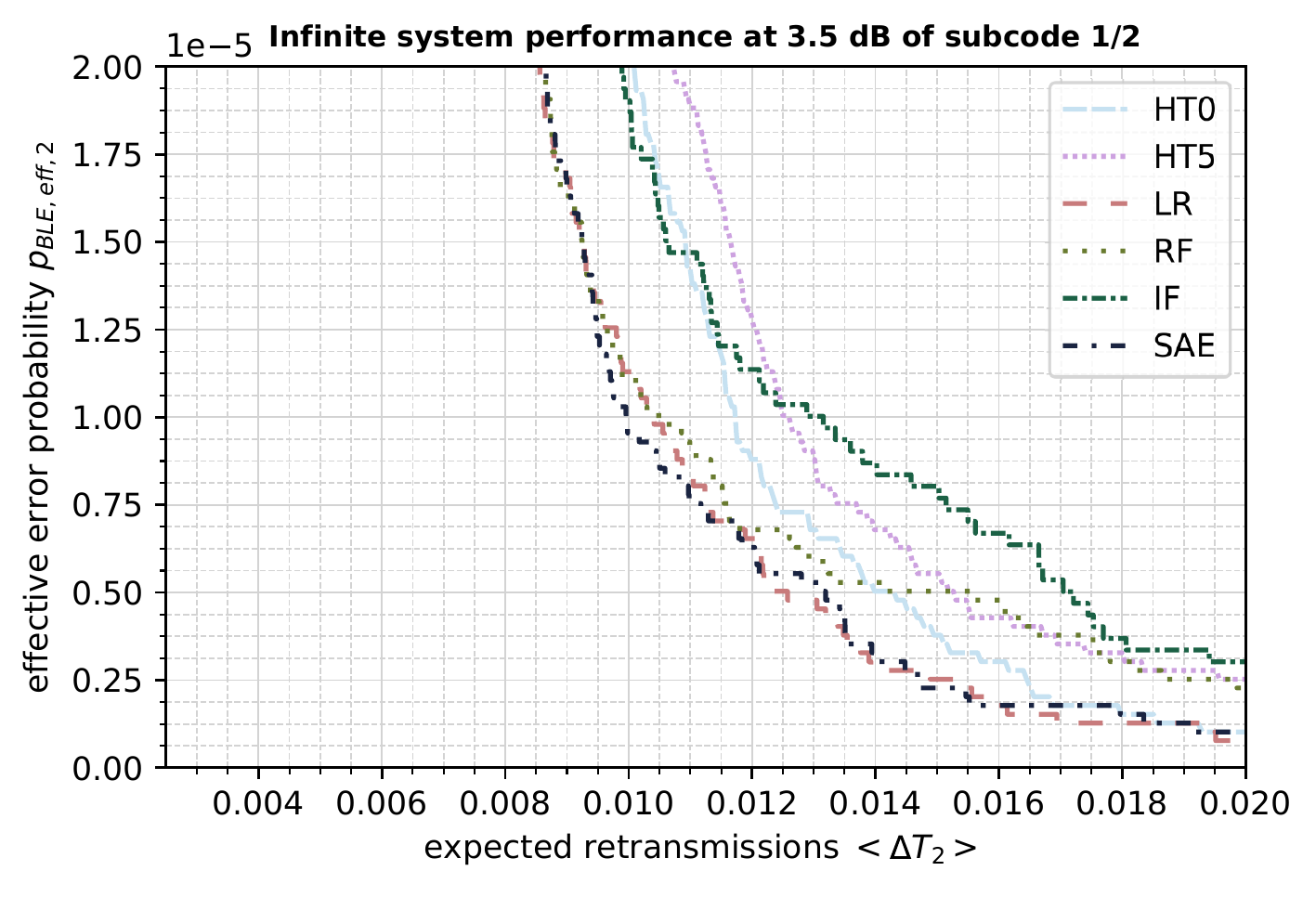}}
  \caption{Selected examples for system performance in the pedestrian channel for two-retransmission E-HARQ with unlimited system resources.}
  \label{fig:basic_vnr_perr}
  \end{figure*}

Irrespective of SNR, subcode length and underlying pedestrian or vehicular channel model, we see an improvement in classification performance upon including history features with best results achieved by incorporating Euclidean distance features. History information  seems to lead to larger improvements in the pedestrian channel compared to the vehicular channel. This is in line with the channel conditions remaining unchanged for a longer time in the pedestrian compared to the vehicular case.

There are different caveats to this result. First of all, as discussed in Section~\ref{sec:features}, the Euclidean distance is only known to the receiver if the underlying codeword is known as it would be the case for a previous pilot transmission, which would however lead to latency overheads. Therefore the result including Euclidean history features most likely overestimates the improvements in classification performance that can be obtained from using history features. Secondly, the use of history features is at tension with the assumption of an independent channel realization for the retransmission in the sense of $P_\mathrm{e'|e}=P_\mathrm{e}$ as used in our system model. It is very unlikely that the improvements in classification performance can compensate the loss of approximately one order of magnitude in the error rate for the retransmission of $P_\mathrm{e'|e}\approx 10^{-2}$ using the same channel compared to the baseline BLER of the order of $10^{-3}$ for an independent retransmission. Therefore the system level analysis is carried out using VNR-features only. Nevertheless the results put forward here stress the prospects of further investigations of features that explicitly characterize the channel state such as explicit channel state information that could have been obtained by a pilot transmission preceding the transmission.

\subsection{System performance}
\label{sec:systemperformance}

We start by discussing system performance based on the simple probabilistic model for E-HARQ with unlimited system resources as introduced in Section~\ref{sec:probmodel}. The results are obtained straightforwardly from the FNR-FPR-curves presented in Section~\ref{sec:vnrfeatures} using (\ref{eq:peff}) and (\ref{eq:ETr}) or the corresponding generalizations for multiple retransmissions (\ref{eq:peffn}) and (\ref{eq:ETrn}). Here we adopt $P_{\mathrm{e' | e}}=P_\mathrm{e}$ as in Section~\ref{sec:systemmodel}. Here we present results for two retransmissions that are possible for E-HARQ in both TTI scenarios discussed in Section~\ref{sec:systemmodel}. In fact, increasing the number of retransmissions beyond two does not lead to further noticeable improvements in the given FNR range. In all cases effective BLERs of the order $10^{-5}$ are attainable. Decreasing the subcode length from 5/6 to 1/2 while keeping the same effective BLER of $1\cdot 10^{-5}$ as a definite example requires an increase of 40\% and 45\% in retransmissions at SNR \unit[4]{dB} and \unit[3]{dB} respectively. Correspondingly, decreasing the SNR for fixed subcode length from \unit[4]{dB} to \unit[3]{dB} while again keeping the effective BLER fixed leads to an overhead of 70\% and 77\% in retransmissions for subcode 5/6 and 1/2 respectively. However, as discussed in Section~\ref{sec:finitesystemsize}, the presented effective BLERs only represent theoretical lower bounds for actual packet failure rates that are achievable in actual systems as they do not incorporate scheduling effects. In this infinite system setting there is no distinguished working point for the classifier and the only way of discriminating between different classifiers in the system setting is to rank by the number of expected transmissions for fixed effective error probability. 
\begin{figure*}[ht]
\centering     
\subfigure[3.5 dB subcode 1/2 (high load, $P_\mathrm{A,UE} = 0.36$)]{\label{fig:vnr35_12_high_system_perf}\includegraphics[width=.45\textwidth]{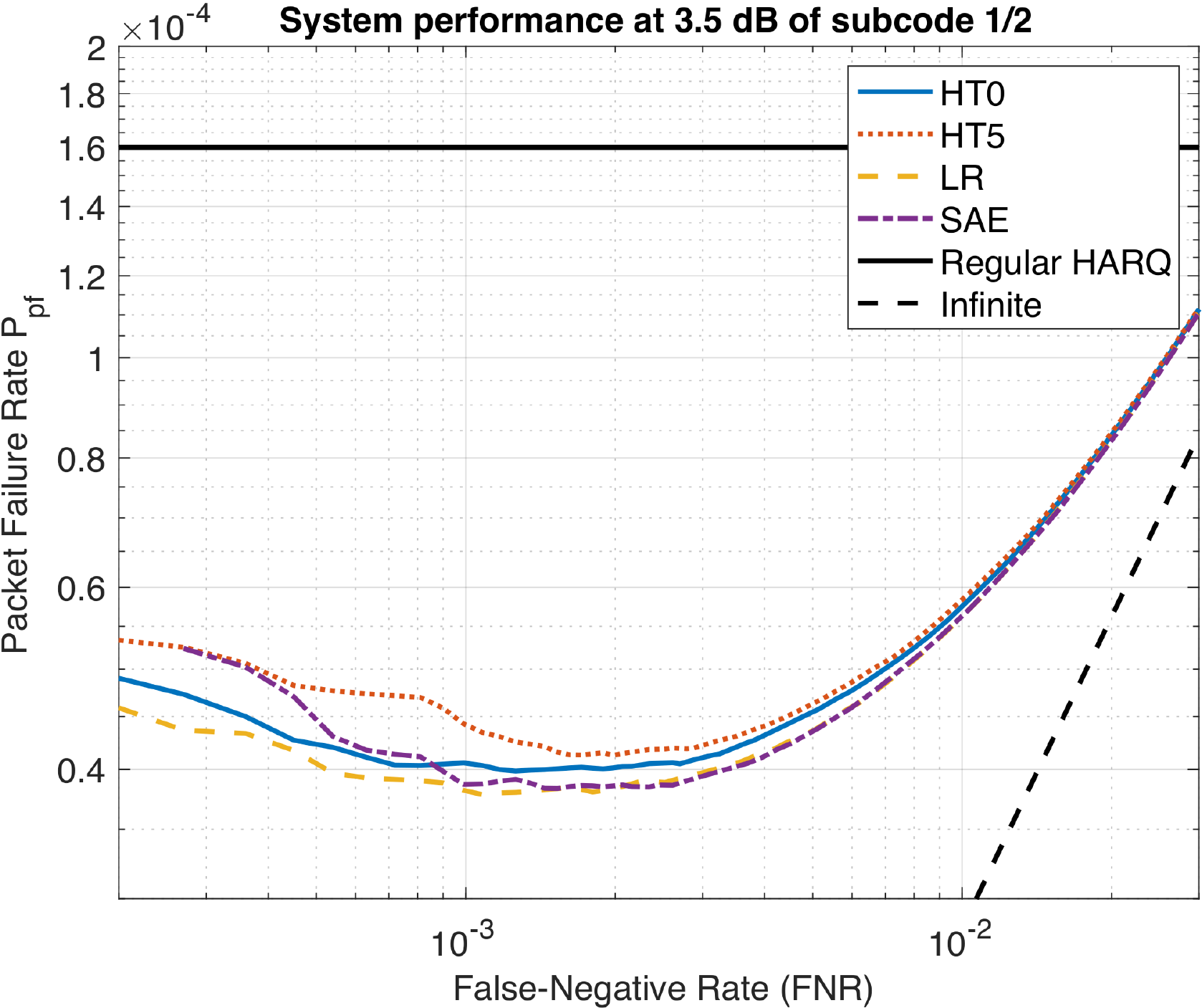}}
\subfigure[3.5 dB subcode 1/2 (medium load, $P_\mathrm{A,UE} = 0.30$)]{\label{fig:vnr35_12_low_system_perf}\includegraphics[width=.45\textwidth]{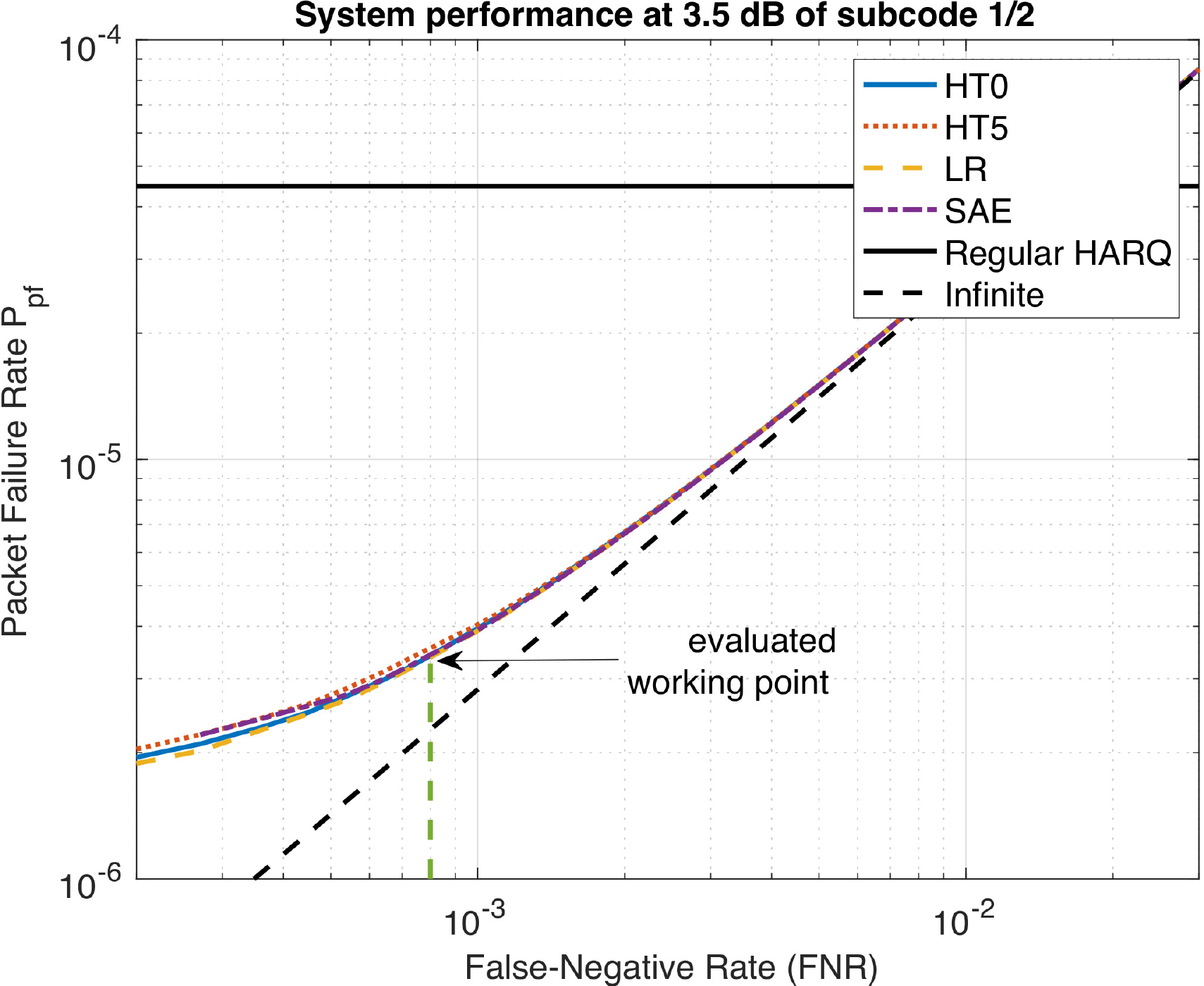}}
\subfigure[4.0 dB subcode 5/6 (high load, $P_\mathrm{A,UE} = 0.36$)]{\label{fig:vnr40_56_high_system_perf}\includegraphics[width=.45\textwidth]{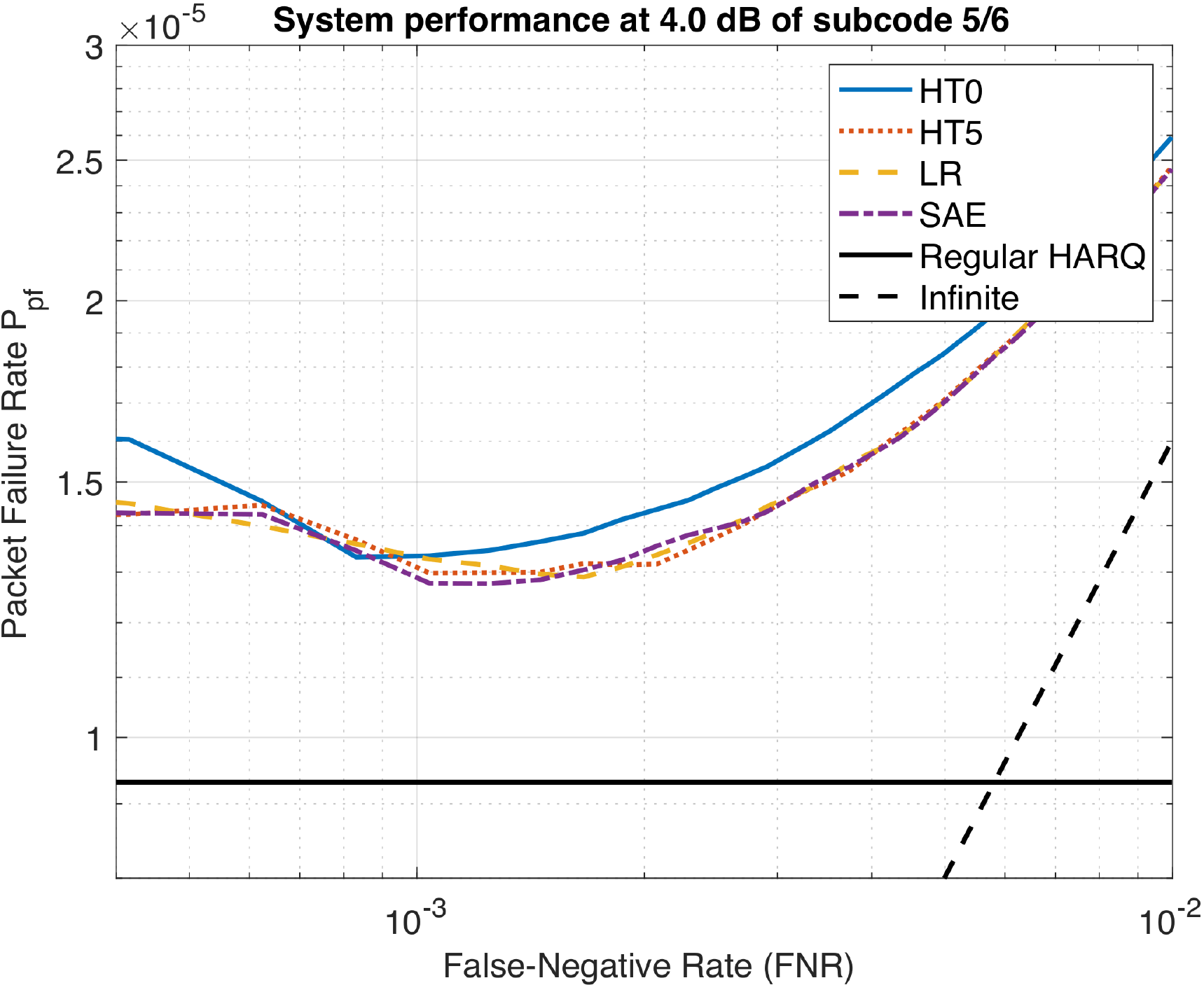}}
\subfigure[4.0 dB subcode 5/6 (medium load, $P_\mathrm{A,UE} = 0.30$)]{\label{fig:vnr40_56_low_system_perf}\includegraphics[width=.45\textwidth]{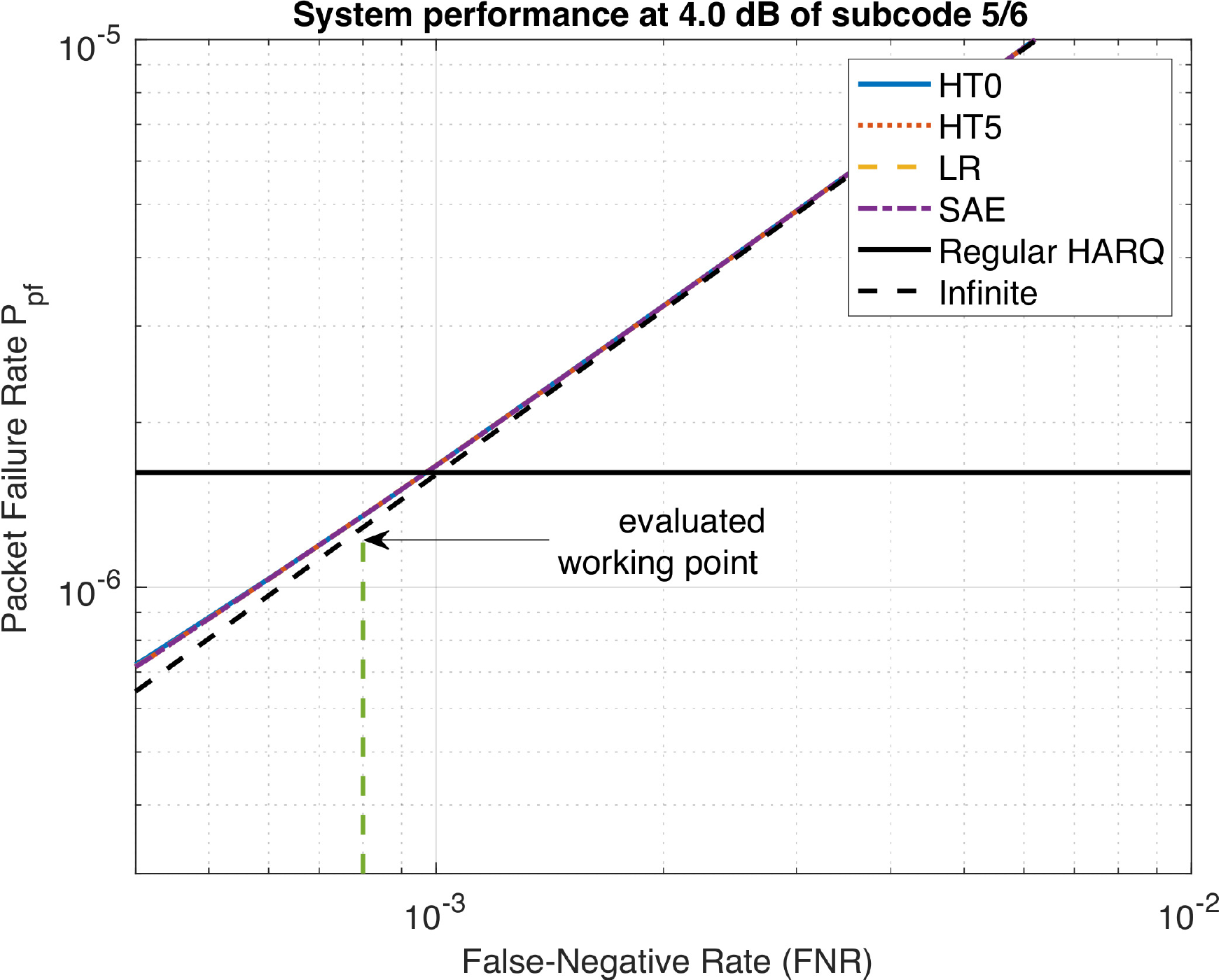}}
\caption{Exemplary system performance comparison for rate 1/2 and 5/6 prediction schemes in high load and medium load scenarios (green dashed line indicates $\mathrm{FNR}_\mathrm{eval}$).}
\label{fig:system_perf}
\end{figure*}

\begin{figure*}[ht]
  \centering     
  \subfigure[3.0 dB subcode 1/2 (high load, $P_\mathrm{A,UE} = 0.36$ and $T_c = 3$)]{\label{fig:vnr30_12_high_3_system_perf}\includegraphics[width=.45\textwidth]{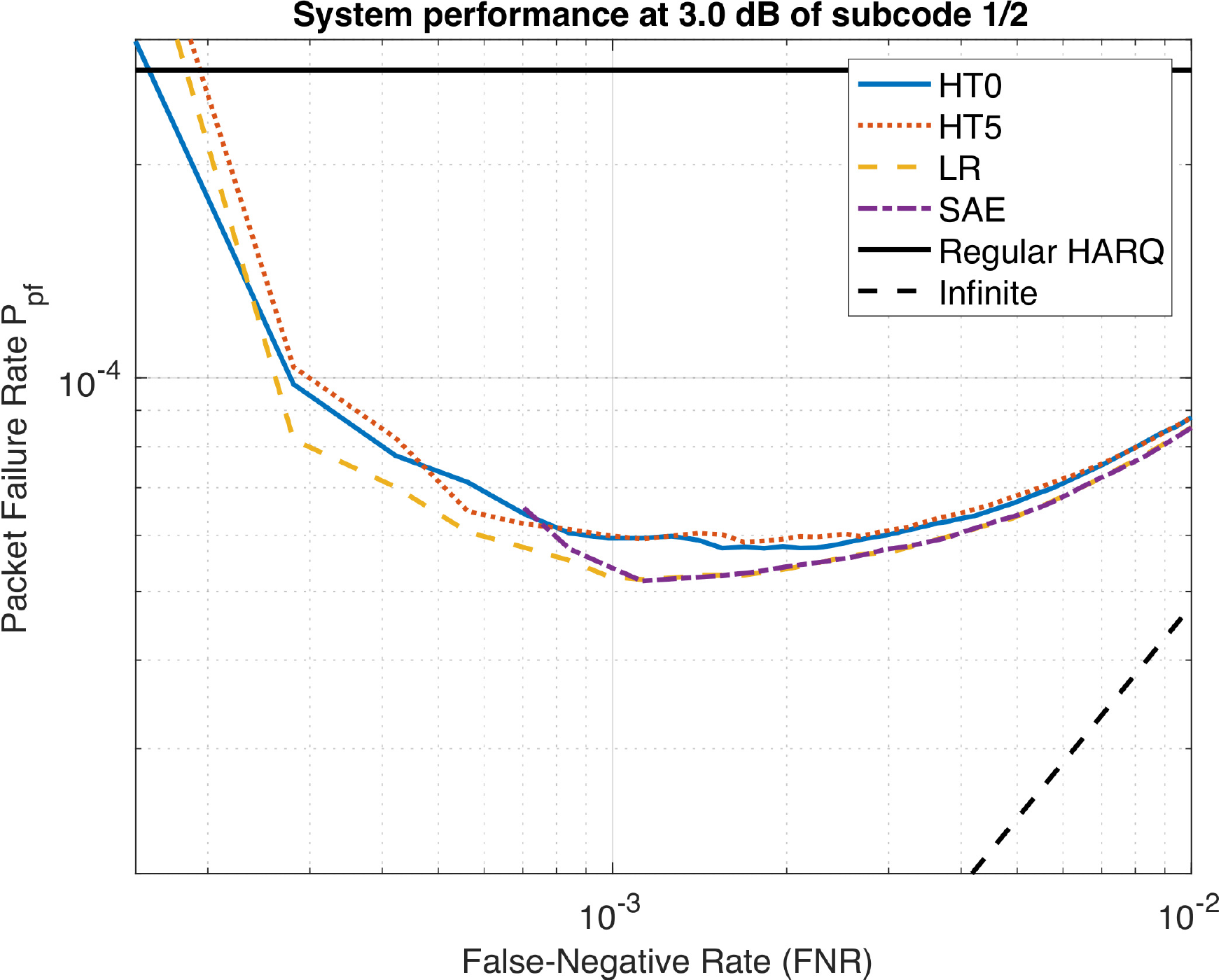}}
  \subfigure[3.0 dB subcode 1/2 (high load, $P_\mathrm{A,UE} = 0.36$ and $T_c = 4$)]{\label{fig:vnr30_12_high_4_system_perf}\includegraphics[width=.45\textwidth]{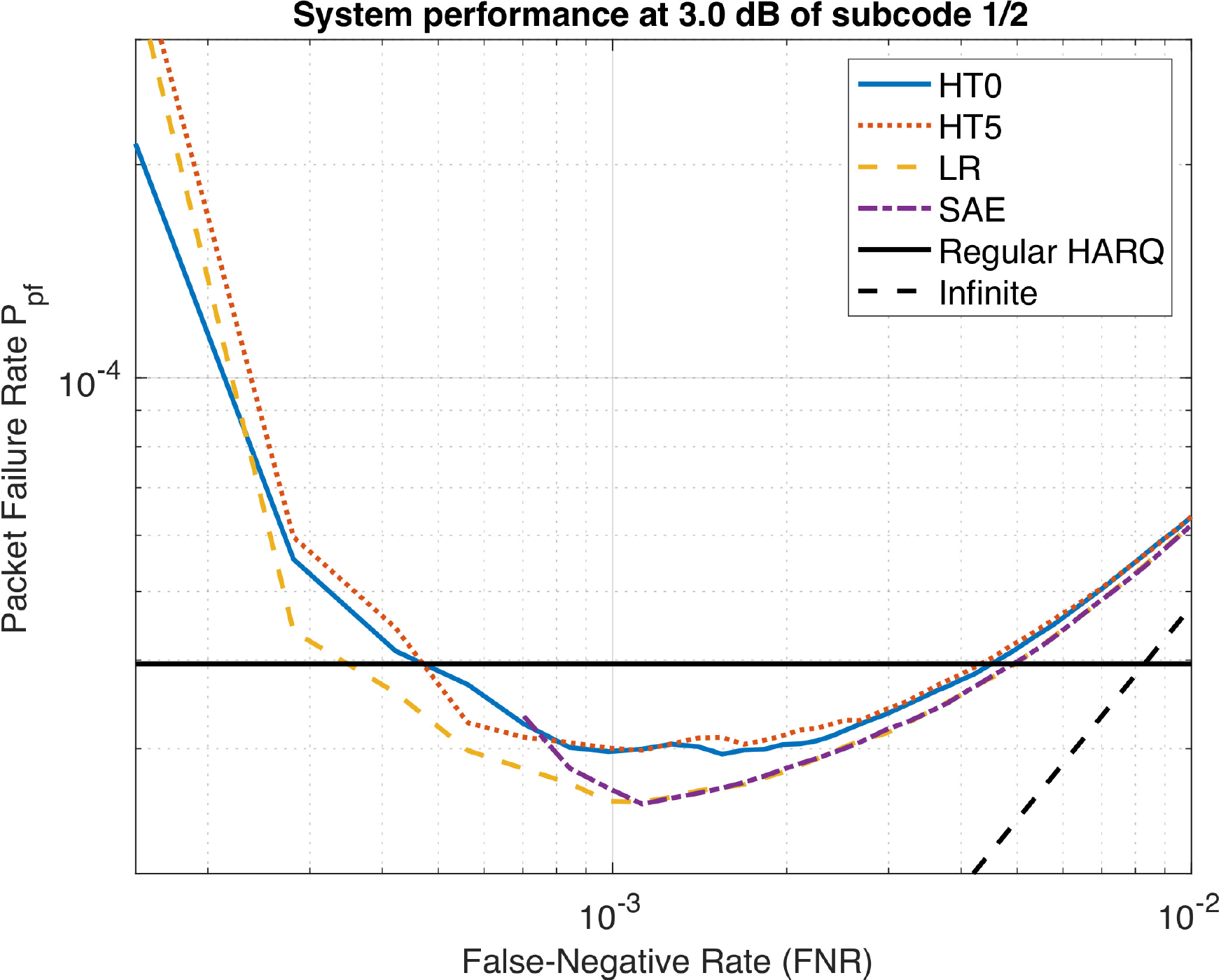}}
  \caption{Effects of the scheduling gain in the high load regime for the strict and relaxed latency constraint.}
  \label{fig:system_perf_resource_short}
  \end{figure*}
Figure~\ref{fig:system_perf} shows exemplary results of the packet failure rate over the \ac{FNR} of the \ac{E-HARQ} schemes under medium ($P_\mathrm{A,UE} = 0.3$) and high system load ($P_\mathrm{A,UE} = 0.36$) together with the regular \ac{HARQ}-baseline and the infinite system results from (\ref{eq:peffn}). The upper Figures~\ref{fig:vnr35_12_high_system_perf} and \ref{fig:vnr35_12_low_system_perf} show the long \ac{TTI} design, as described in Section~\ref{sec:systemmodel}, at \unit[3.5]{dB}. For the high load (Figure~\ref{fig:vnr35_12_high_system_perf}) as well as the medium load (Figure~\ref{fig:vnr35_12_low_system_perf}) scenarios, the \ac{E-HARQ} schemes achieve a superior performance compared to the regular \ac{HARQ} thanks to the additional retransmission which is possible within the same latency constraint. However, a packet failure rate less than $10^{-5}$ is only achieved in the medium load scenario. Here, we note that the actual performance of the \ac{E-HARQ} schemes is approximated well by the approach with infinite resources, at least for high packet failure rates above $10^{-5}$. Only in the lower region an attenuation of the decrease is visible, whereas all prediction schemes achieve a comparable performance. In the high load scenario in Figure~\ref{fig:vnr35_12_high_system_perf}, we see the trade-off behavior, discussed in Section~\ref{sec:finitesystemsize}. The packet failure rate decreases only up to a certain minimum at the optimal \ac{FNR}-\ac{FPR} trade-off and starts increasing after passing that point. So, lowering the \ac{FNR} further after passing that point increases the packet failure due to the resource shortage. In this region, the actual performance of the prediction schemes becomes critical. Hence, SAE and LR have the lowest optimum. HT0 and HT5 perform worse at their optimal operation points, whereas HT0 is still performing better than HT5.

The resource shortage effect is clearly visible in Figure~\ref{fig:system_perf_resource_short}, where the same load is applied in both scenarios but the latency constraint is relaxed in Figure~\ref{fig:vnr30_12_high_4_system_perf}. As obvious in Figure~\ref{fig:vnr30_12_high_3_system_perf}, the packet failure rate for all schemes is far away from the targeted packet failure rate of $10^{-5}$. With a relaxed latency constraint, as shown in Figure~\ref{fig:vnr30_12_high_4_system_perf}, the performance is closer to the target packet failure rate. This improvement is explainable by two effects. First, the E-HARQ schemes benefit from the additional retransmission, which is possible in the relaxed latency constraint and thus in total achieve still a better performance than the regular HARQ. However, the gap is smaller compared to the normal latency constraint. Especially in the high load scenario, the regular HARQ profits from the increased scheduling flexibility although it can only perform the same number of HARQ retransmissions.
The resource shortage effect is also observable for the regular \ac{HARQ} performance comparing the medium load and the high load scenarios. It is notable that the regular \ac{HARQ} could at least achieve a packet failure rate less than $10^{-4}$ in the medium load scenario, whereas it is performing even worse in the high load scenario. We can see that even more clearly in the short \ac{TTI} design in Figures~\ref{fig:vnr40_56_high_system_perf} and \ref{fig:vnr40_56_low_system_perf}. In the medium load scenario in Figure~\ref{fig:vnr40_56_low_system_perf}, the regular HARQ achieves a packet failure rate of almost $10^{-6}$, which corresponds approximately to the ideal performance of HARQ. In this system setup the regular HARQ makes use of the whole scheduling flexibility and thus, at least for the medium load scenario, the influence of scheduling probabilities can be neglected for the regular HARQ. Despite the limited scheduling flexibilities of the E-HARQ schemes, they achieve a better performance than the regular HARQ. However, this changes in the high load scenario in Figure~\ref{fig:vnr40_56_high_system_perf}. Here, we observe that the regular HARQ benefits from its scheduling gain and thus, achieves the lower packet failure rate. In the high load scenario, we see that all prediction schemes achieve a similar performance, except the HT0 which is remarkably less performing than the others. 

As already visible in the previous results, there is no clear winning scheme for all the scenarios. However, to compare the overall performance of the schemes, we  introduce the total score $t_s = \sum_{t} \log_{10}\frac{P_{\mathrm{pfr},s,t}}{\min_{s }P_{\mathrm{pfr},s,t}}$, where $t$ is the enumerator over all SNRs and prediction rates and $s$ is the enumerator over all HARQ schemes. In Table~\ref{tab:system_perf} we present the results for all scenarios, where the ''$<$'' sign indicates that an \ac{FNR} larger than the optimal \ac{FNR} has been used for evaluations. As already notable in Figure~\ref{fig:system_perf}, the available data does not allow arbitrary small \acp{FNR} and thus the optimal operation point cannot be reached for the medium load case. Hence, we used $\mathrm{FNR}_\mathrm{eval}= 8 \cdot 10^{-4}$ for the medium load evaluations since it provides a sufficiently reliable estimation. The evaluation at fixed FNR underestimates the overall performance compared to regular HARQ but allow a reliable ranking between different classifiers. Obviously, for reaching the optimal point of operation more data is required in the medium load case.

Nevertheless, in the medium load regime, LR achieves by far the best overall performance. The other \ac{E-HARQ} schemes achieve a similar performance, where HT0 is able to achieve a slightly better performance than the other two. Interestingly here, SAE has a worse performance compared to LR although it was the best performing classifier in the previous section. A closer inspection reveals that for very low \ac{FNR} SAE cannot keep up with the other classifiers. Especially that region, being not relevant for the performance metrics of the previous section, explains the contradicting results. However, the expected performance for SAE is observed going to the high load regime. Here, SAE and LR are the best performing \ac{E-HARQ} schemes far ahead HT0, HT5 and regular \ac{HARQ}. As already noted in Figure~\ref{fig:system_perf}, in the high load regime the performance at higher \acp{FNR} is key. Hence, SAE is again in a well-operating region. In this region, we also note that HT0 is performing the worst among the classifiers though having the second-best performance in the medium load regime.

Summa summarum, E-HARQ is able to achieve large gains in means of packet failure rate compared to regular HARQ under latency constraints. The rather small performance difference in medium and high load scenarios stems from the resolution issue at very small \acp{FNR} in the medium load scenarios. Especially, LR is a promising approach, which achieves a good overall performance in high load as well as medium load regimes. The SAE as best-performing algorithm in the high-load case and the more extendable approach compared to LR might provide a viable alternative if the performance at very low \acp{FNR} is improved.

\begin{table*}[ht]
\centering
\caption{Comparing system performance ($P_\text{pf}$) at their optimal FNR-FPR trade-off, as described in Section~\ref{sec:finitesystemsize}.}
\begin{tabular}{c|c||c|c|c|c|c}
& scenario & regular HARQ & HT0 & HT5 & LR & SAE \\
\hline
\hline                                                                                                                 
 \parbox[t]{2mm}{\multirow{8}{*}{\rotatebox[origin=c]{90}{medium load}}}                                                                     
&3.0dB 1/2 ped & $8.59 \cdot 10^{-5}$ & $<6.26 \cdot 10^{-6}$ & $<6.28 \cdot 10^{-6}$ & $<\mathbf{6.13 \cdot 10^{-6}}$ & $<6.18 \cdot 10^{-6}$ \\
\cline{2-7}                                                                                                                                  
&3.5dB 1/2 ped & $4.48 \cdot 10^{-5}$ & $<3.43 \cdot 10^{-6}$ & $<3.80 \cdot 10^{-6}$ & $<\mathbf{3.40 \cdot 10^{-6}}$ & $<3.67 \cdot 10^{-6}$ \\
\cline{2-7}                                                                                                                                  
&4.0dB 1/2 ped & $2.36 \cdot 10^{-5}$ & $<1.93 \cdot 10^{-6}$ & $<1.92 \cdot 10^{-6}$ & $<\mathbf{1.88 \cdot 10^{-6}}$ & $<1.92 \cdot 10^{-6}$ \\
\cline{2-7}                                                                                                                                  
&3.0dB 5/6 ped & $8.59 \cdot 10^{-5}$ & $<6.36 \cdot 10^{-6}$ & $<6.15 \cdot 10^{-6}$ & $<6.08 \cdot 10^{-6}$ & $<\mathbf{5.99 \cdot 10^{-6}}$ \\
\cline{2-7}                                                                                                                                  
&3.5dB 5/6 ped & $7.86 \cdot 10^{-6}$ & $<2.25 \cdot 10^{-6}$ & $<\mathbf{2.22 \cdot 10^{-6}}$ & $<\mathbf{2.22 \cdot 10^{-6}}$ & $<2.23 \cdot 10^{-6}$ \\
\cline{2-7}                                                                                                                                  
&4.0dB 5/6 ped & $1.62 \cdot 10^{-6}$ & $<\mathbf{1.40 \cdot 10^{-6}}$ & $<\mathbf{1.40 \cdot 10^{-6}}$ & $<\mathbf{1.40 \cdot 10^{-6}}$ & $<\mathbf{1.40 \cdot 10^{-6}}$ \\
\cline{2-7}                                                                                                                                  
&3.5dB 1/2 veh & $4.48 \cdot 10^{-5}$ & $<3.47 \cdot 10^{-6}$ & $<3.48 \cdot 10^{-6}$ & $<\mathbf{3.37 \cdot 10^{-6}}$ & $<3.68 \cdot 10^{-6}$ \\
\hhline{~======}                                                                                                                                    
&total score $t_s$ & 6.2577 & 0.0685 & 0.0936 & \textbf{0.0075} & 0.0866 \\    
\hline                                                                                                 
\hline                                                                                                                 
\parbox[t]{2mm}{\multirow{8}{*}{\rotatebox[origin=c]{90}{high load}}}
&3.0dB 1/2 ped & $2.72 \cdot 10^{-4}$ & $5.75 \cdot 10^{-5}$ & $5.87 \cdot 10^{-5}$ & $5.20 \cdot 10^{-5}$ & $\mathbf{5.17 \cdot 10^{-5}}$ \\
\cline{2-7}                                                                                                                                  
&3.5dB 1/2 ped & $1.60 \cdot 10^{-4}$ & $3.99 \cdot 10^{-5}$ & $4.13 \cdot 10^{-5}$ & $\mathbf{3.78 \cdot 10^{-5}}$ & $3.83 \cdot 10^{-5}$ \\
\cline{2-7}                                                                                                                                  
&4.0dB 1/2 ped & $9.56 \cdot 10^{-5}$ & $2.94 \cdot 10^{-5}$ & $2.88 \cdot 10^{-5}$ & $\mathbf{2.76 \cdot 10^{-5}}$ & $2.81 \cdot 10^{-5}$ \\
\cline{2-7}                                                                                                                                  
&3.0dB 5/6 ped & $2.72 \cdot 10^{-4}$ & $5.59 \cdot 10^{-5}$ & $4.99 \cdot 10^{-5}$ & $4.89 \cdot 10^{-5}$ & $\mathbf{4.70 \cdot 10^{-5}}$ \\
\cline{2-7}                                                                                                                                  
&3.5dB 5/6 ped & $1.61 \cdot 10^{-5}$ & $2.05 \cdot 10^{-5}$ & $\mathbf{1.61 \cdot 10^{-5}}$ & $1.65 \cdot 10^{-5}$ & $1.68 \cdot 10^{-5}$ \\
\cline{2-7}                                                                                                                                  
&4.0dB 5/6 ped & $\mathbf{9.32 \cdot 10^{-6}}$ & $1.33 \cdot 10^{-5}$ & $1.30 \cdot 10^{-5}$ & $1.29 \cdot 10^{-5}$ & $1.28 \cdot 10^{-5}$ \\
\cline{2-7}                                                                                                                                  
&3.5dB 1/2 veh & $1.60 \cdot 10^{-4}$ & $3.88 \cdot 10^{-5}$ & $4.06 \cdot 10^{-5}$ & $\mathbf{3.64 \cdot 10^{-5}}$ & $\mathbf{3.64 \cdot 10^{-5}}$ \\
\hhline{~======}                                                                                                                               
& \textbf{total score $t_s$} & 3.2918 & 0.4599 & 0.3306 & 0.1713 & \textbf{0.1703} \\   
\end{tabular}
\label{tab:system_perf}
\end{table*}

\section{Summary and Conclusions}
\label{sec:summary}
In this work we investigated machine learning techniques for \ac{E-HARQ} by means of more 
elaborate classification methods to predict the decoding result ahead of the final decoder iteration.
We demonstrated that more complex estimators such as logistic regression or supervised autoencoder that exploit the evolution of the subcodeword during the first
few decoder iterations lead to quantitative improvements in the prediction performance over baseline results across different SNR and channel conditions. 
We put forward a simple probabilistic model and a more elaborate system model incorporating scheduling effects to evaluate system performance in a realistic environment.
In this way we were able to demonstrate the practical feasibility of reaching effective packet error rates of the order $10^{-5}$ as required for URLLC across a range of different SNRs, subcode lengths and system loads.
More importantly, we showed that enabling more \ac{HARQ} iterations by introducing E-HARQ improves the overall reliability over regular \ac{HARQ} under strict maximum latency constraints. We hope to extend this analysis to other URLLC-relevant scenarios such as MIMO in the future.

Further improvements of the classification performance are conceivable extending the approach presented in this work. Our results suggest 
that history features incorporating channel information from previous transmissions positively influence the classification performance
but remain to be investigated in more detail. Similarly it seems very likely that classification algorithms could profit from intra-message features that go 
beyond the simple averaging features such as VNRs considered in this work, which ideally directly incorporate the code structure of the 
underlying channel code. However, such features suffer from high dimensionality and large correlations. Here a challenge remains to identify the most discriminative set of input features and appropriate 
classification algorithms to further improve the classification performance.

Ultimately, more advanced classification algorithms, which are within reach using techniques presented in this work, might 
allow more fine-grained feedback instead of a binary NACK/ACK response. Incorporating this information on the level of the feedback protocol
would allow to design custom feedback schemes with potentially large latency gains.
\bibliographystyle{IEEEtran}
\bibliography{lib}

\appendices

\section{Probabilistic model for multiple-retransmission \ac{E-HARQ}}
\label{app:eharqmult}
In this section, we present the generalization of the results from Section~\ref{sec:probmodel}. These are obtained straightforwardly using the same formalism as above.
The generalization of the effective error probability from (\ref{eq:peff}) to the case of $n$ retransmissions is given by the iterative relation
\begin{equation}
\label{eq:peffn}
p_{\mathrm{BLE,eff,n}} = P_\mathrm{e} P_\mathrm{H,e,1}\,,
\end{equation}
where we defined for $j \leq n$
\ifone
\begin{align}
P_\mathrm{H,e,j} =&
P_\mathrm{fn} + (1- P_\mathrm{fn}) P_{e^{(j)}|e^{(j-1)}\wedge\ldots\wedge e^{(0)}}\cdot P_\mathrm{H,e,j+1}\,,
\end{align} 
\else
\begin{align}
P_\mathrm{H,e,j} =&
P_\mathrm{fn} + (1- P_\mathrm{fn})\nonumber\\
&\cdot P_{e^{(j)}|e^{(j-1)}\wedge\ldots\wedge e^{(0)}}\cdot P_\mathrm{H,e,j+1}\,,
\end{align} 
\fi
and otherwise $P_\mathrm{H,e,j}=1$,
which reduces to (\ref{eq:peff}) for $n=1$. For simplicity we can work with independent retransmissions i.e.\ $P_{e^{(j)}|e^{(j-1)}\wedge\ldots\wedge e^{(0)}}=P_\mathrm{e}$, where we used the shorthand notation $P_{e^{(j)}|e^{(j-1)}\wedge\ldots\wedge e^{(0)}}\equiv P_{(e^{(j)}=1)|(e^{(j-1)}=1)\wedge\ldots\wedge (e^{(0)}=1)} $. Explicit expression for up to three retransmissions are in this case given by
\ifone
\begin{align}
p_{\mathrm{BLE,eff,1}} =& P_\mathrm{e}\left(P_\mathrm{fn}+(1-P_\mathrm{fn})P_\mathrm{e}\right)\,,\\
p_{\mathrm{BLE,eff,2}} =& P_\mathrm{e}(P_\mathrm{fn}+(1-P_\mathrm{fn})P_\mathrm{e}\left(P_\mathrm{fn}+(1-P_\mathrm{fn})P_\mathrm{e}\right))\,,\\
p_{\mathrm{BLE,eff,3}} =& P_\mathrm{e}(P_\mathrm{fn}+(1-P_\mathrm{fn})P_\mathrm{e}(P_\mathrm{fn}+(1-P_\mathrm{fn})P_\mathrm{e}\left(P_\mathrm{fn}+(1-P_\mathrm{fn})P_\mathrm{e}\right)))\,.
\end{align}
\else
\begin{align}
p_{\mathrm{BLE,eff,1}} =& P_\mathrm{e}\left(P_\mathrm{fn}+(1-P_\mathrm{fn})P_\mathrm{e}\right)\,,\\
p_{\mathrm{BLE,eff,2}} =& P_\mathrm{e}(P_\mathrm{fn}+(1-P_\mathrm{fn})P_\mathrm{e}\nonumber\\
&\quad\cdot\left(P_\mathrm{fn}+(1-P_\mathrm{fn})P_\mathrm{e}\right))\,,\\
p_{\mathrm{BLE,eff,3}} =& P_\mathrm{e}(P_\mathrm{fn}+(1-P_\mathrm{fn})P_\mathrm{e}\nonumber\\
&\quad\cdot(P_\mathrm{fn}+(1-P_\mathrm{fn})P_\mathrm{e}\nonumber\\
&\quad\cdot\left(P_\mathrm{fn}+(1-P_\mathrm{fn})P_\mathrm{e}\right)))\,.
\end{align}
\fi
If we denote the set of binary sequences of length $n$ by $\mathcal{S}_n$, the probability $P_\mathrm{r,n}$ for having $n$ retransmissions is given by
\ifone
\begin{align}
\label{eq:prn}
P_\mathrm{r,n}=&\sum_{(x_0, x_1,\ldots x_{n-1})\in \mathcal{S}_{n}} \prod_{i=0}^{n-1} (1-P_\mathrm{fn})^{x_i} {P_\mathrm{fp}}^{1-x_i}\prod_{j=0}^{n-1} P_{(e^{(j)}=x_j)|(e^{(j-1)}=x_{j-1})\wedge\ldots\wedge(e^{(0)}=x_0)}\,,
\end{align}
\else
\begin{align}
\label{eq:prn}
P_\mathrm{r,n}=&\sum_{(x_0, x_1,\ldots x_{n-1})\in \mathcal{S}_{n}} \prod_{i=0}^{n-1} (1-P_\mathrm{fn})^{x_i} {P_\mathrm{fp}}^{1-x_i}\nonumber\\
&\prod_{j=0}^{n-1} P_{(e^{(j)}=x_j)|(e^{(j-1)}=x_{j-1})\wedge\ldots\wedge(e^{(0)}=x_0)}\,,
\end{align}
\fi
which again reduces to (\ref{eq:pr1}) for $n=1$. Again we may set $P_{(e^{(j)}=x_j)|(e^{(j-1)}=x_{j-1})\wedge\ldots\wedge(e^{(0)}=x_0)}=P_\mathrm{e}$ for independent transmissions. In this case Eq.~\ref{eq:prn} simplifies to
\begin{equation}
P_\mathrm{r,n}= \left(P_\mathrm{e}(1-P_\mathrm{fn})+(1-P_\mathrm{e})P_\mathrm{fp}\right)^n\, .
\end{equation}
The total number of expected transmissions $\langle \Delta T_n \rangle$ is then simply given by
\begin{equation}
\label{eq:ETrn}
\langle \Delta T_n \rangle = \sum_{i=1}^n i\cdot P_\mathrm{r,i}\,.
\end{equation}
\section{Supervised autoencoder for supervised anomaly detection}
\label{sec:SAE}
The supervised autoencoder is a neural-network-based supervised anomaly detection algorithm. It enjoys a number of advantages compared to for example shallow neural network classifiers applied directly to the input data that arise from the fact that the classifier is not applied to the data directly but rather to the bottleneck features of an autoencoder. Therefore it is able to work in heavily imbalanced scenarios as the one considered in this work and does not suffer from highly correlated input.

For the construction of the SAE we leverage the approach put forward in \cite{zong2018deep} albeit in a supervised anomaly detection setting. Similar to their work we use a regular multi-layer fully-connected autoencoder with $L_2$ loss as a backbone. 
  In addition, we jointly train a fully-connected classifier operating on the bottleneck features that is trained using cross entropy loss, see Figure~\ref{fig:autoencoder} The idea behind the joint training is to allows the autoencoder to not only build a reduced representation but also to build bottleneck features that contain most discriminative information for the classification task. We also experimented with using features derived from the reconstruction error (measured using cosine distance and reduced Euclidean distance) as additional input to the classifier as proposed in \cite{zong2018deep} but found no improvement.

\begin{figure}[ht]
  \centering
  \ifone
  \includegraphics[width=.5\columnwidth]{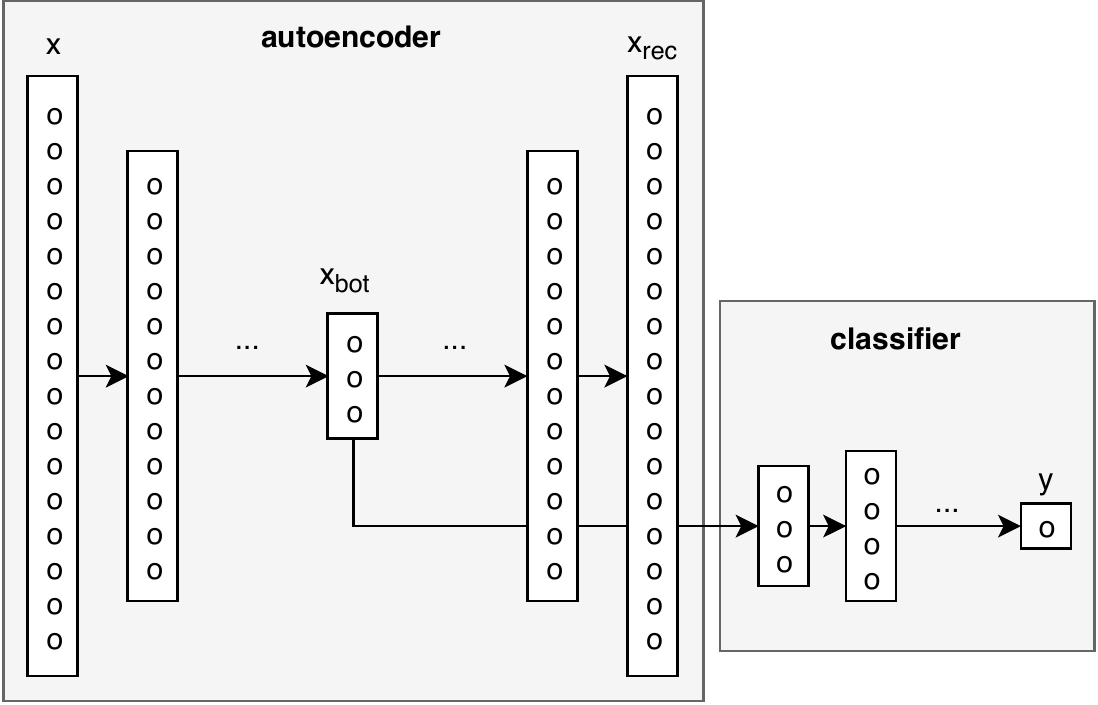}
  \else
  \includegraphics[width=.9\columnwidth]{autoencoder}
  \fi
  \caption{Architecture for supervised anomaly detection using a jointly trained supervised autoencoder ($x$: input, $x_\mathrm{rec}$: reconstructed input, $x_\mathrm{bot}$: bottleneck features, $y$: predicted label).}
  \label{fig:autoencoder}
\end{figure}

There are multiple ways of preventing overfitting in this setting: early stopping, reducing the bottleneck dimension, implementing the SAE as a denoising autoencoder \cite{vincent2008extracting} or regularization using dropout \cite{srivastava2014dropout}. In our case dropout regularization both in the classifier as well as in the autoencoder itself proved most effective.

The network configuration reads for the autoencoder  [FC($d$,25), FC(25,10), FC(10,3), FC(3,10), FC(10,25), Lin(25,$d$)] and for the classifier [FC(3,10), FC(10,5), Lin(5,2), SM] with FC(x,y) $\equiv$ [Lin(x,y), BN, ReLU, DO] and input dimension $d$. Here Lin(x,y) denotes a linear transformation layer, BN a Batch Normalization-layer \cite{ioffe2015batch}, ReLU a ReLU activation layer, DO a dropout layer at a dropout rate fixed via hyperparameter tuning (both 0.2) and SM a softmax activation layer. Optimization is performed using the Adam optimizer \cite{Kingma2014AdamAM} at learning rate 0.001. To stabilize training oversampling the minority class samples by a factor of 100 turned out to be beneficial.

\section{Scheduling probability of a system with finite resources}
\label{sec:sched}
In (\ref{eq:pfr}), $P(T_{\mathrm{j}} \leq T_\mathrm{c})$ highly depends on the load of the system, since it is mainly a scheduling problem. Based on the resource distribution $P_{\mathrm{res}}$ which is discussed in Appendix~\ref{sec:res_distr}, we can formulate the probability $P(T_{\mathrm{j}} \leq T_\mathrm{c})$ of scheduling the initial transmission arriving at time slot $t_0 > 0$ and $j-1$ retransmissions within a time constraint $T_c$ as follows,
\ifone
\begin{align}
P(T_{\mathrm{j}} \leq T_\mathrm{c}) &= \sum_{k_0=0}^{T_c - j T_\mathrm{RTT}-1} P_1(t_0, k_0) \sum_{k_1=k_0+T_\mathrm{RTT}}^{T_c - (j-1)T_\mathrm{RTT}-1}P_1(t_0+k_{0}+T_\mathrm{RTT},k_1-k_0-T_\mathrm{RTT})\cdots\nonumber\\
&\quad\quad\sum_{k_j=k_{j-1}+T_\mathrm{RTT}}^{T_c-1}P_1(t_0+k_{j-1}+T_\mathrm{RTT},k_j-k_{j-1}-T_\mathrm{RTT}) \,,
\label{eq:PTj}
\end{align}
\else
\begin{align}
P&(T_{\mathrm{j}} \leq T_\mathrm{c}) = \sum_{k_0=0}^{T_c - j T_\mathrm{RTT}-1} P_1(t_0, k_0) \nonumber\\
&\sum_{k_1=k_0+T_\mathrm{RTT}}^{T_c - (j-1)T_\mathrm{RTT}-1}P_1(t_0+k_{0}+T_\mathrm{RTT},k_1-k_0-T_\mathrm{RTT})\cdots\nonumber\\
&\sum_{k_j=k_{j-1}+T_\mathrm{RTT}}^{T_c-1}P_1(t_0+k_{j-1}+T_\mathrm{RTT},k_j-k_{j-1}-T_\mathrm{RTT})
\label{eq:PTj}
\end{align}
\fi
where $P_1(t_0, \Delta_t)$ is the probability that a packet that has arrived at $t_0$ is scheduled in time slot $t_0 + \Delta_t$. Under the assumption that the resource distribution function is not diverging, the initial argument of $P_1$ in (\ref{eq:PTj}) is set to $t_0$. As mentioned before, $P_1$ is the scheduling probability for an additional transmission assuming that this single transmission does not affect the system probabilities. So, this means that from the slots $t_0$ till the slot $t_0 + \Delta_t -1$ the system is fully loaded and the observed transmission is not scheduled (random scheduling). We allow only in slot $t_0 + \Delta_t$ a lower load or the random scheduler picks the observed transmission. Hence, this is expressed by,
\ifone
\begin{align}
P_1(t_0, \Delta_t)  &\approx \sum_{k_0 = N_{\mathrm{res}}}^{N_\mathrm{max}} P_{\mathrm{res}}(k_0, t_0)\left(1 - \frac{N_{\mathrm{res}}}{k_0+1}\right) \left( \sum_{k_1 = N_{\mathrm{res}}}^{N_\mathrm{max}} P_{\mathrm{res}}(k_1 | N_\mathrm{res})\left(1 - \frac{N_{\mathrm{res}}}{k_1+1}\right)\right)^{(\Delta_t - 1)}\nonumber\\
&\quad\quad\left( \sum_{k=0}^{N_{\mathrm{res}}-1} P_{\mathrm{res}}(k|N_\mathrm{res}) + \sum_{k = N_{\mathrm{res}}}^{N_\mathrm{max}} P_{\mathrm{res}}(k |N_\mathrm{res})\frac{N_{\mathrm{res}}}{k+1} \right)\,,
\end{align}
\else
\begin{align}
P_1&(t_0, \Delta_t)  \approx \sum_{k_0 = N_{\mathrm{res}}}^{N_\mathrm{max}} P_{\mathrm{res}}(k_0, t_0)\left(1 - \frac{N_{\mathrm{res}}}{k_0+1}\right) \nonumber\\
&\left( \sum_{k_1 = N_{\mathrm{res}}}^{N_\mathrm{max}} P_{\mathrm{res}}(k_1 | N_\mathrm{res})\left(1 - \frac{N_{\mathrm{res}}}{k_1+1}\right)\right)^{(\Delta_t - 1)}\nonumber\\
&\left( \sum_{k=0}^{N_{\mathrm{res}}-1} P_{\mathrm{res}}(k|N_\mathrm{res}) + \sum_{k = N_{\mathrm{res}}}^{N_\mathrm{max}} P_{\mathrm{res}}(k |N_\mathrm{res})\frac{N_{\mathrm{res}}}{k+1} \right)\,,
\end{align}
\fi
where $P_{\mathrm{res}}$ is the resource distribution function, which is discussed in more detail in Appendix~\ref{sec:res_distr}.
The scheduling probability $P_1$ is discussed in further detail in Appendix~\ref{sec:sched_prob}.

\section{Scheduling probability in a moderately loaded finite system}\label{sec:sched_prob}

The scheduling probability $P_1$ as the probability that a transmission arriving at $t_0$ is scheduled after $\Delta_t$ \acp{TTI} is given as
\ifone
\begin{align}
P_1(t_0, \Delta_t)  &= \sum_{k_0 = N_{\mathrm{res}}}^\infty P_{\mathrm{res}}(k_0, t_0)\left(1 - \frac{N_{\mathrm{res}}}{k_0+1}\right)\sum_{k_1 = N_{\mathrm{res}}}^\infty P_{\mathrm{res}}(k_1 | k_{0})\left(1 - \frac{N_{\mathrm{res}}}{k_1+1}\right) \cdots\nonumber\\
&\quad\quad\sum_{k_{(\Delta_t-1)} = N_{\mathrm{res}}}^\infty P_{\mathrm{res}}(k_{(\Delta_t-1)} | k_{(\Delta_t-2)})\left(1 - \frac{N_{\mathrm{res}}}{k_{(\Delta_t-1)}+1}\right)\nonumber\\
&\quad\quad\sum_{k=0}^{N_{\mathrm{res}}-1} P_{\mathrm{res}}(k | k_{(\Delta_t-1)}) + \sum_{k = N_{\mathrm{res}}}^\infty P_{\mathrm{res}}(k | k_{(\Delta_t-1)})\frac{N_{\mathrm{res}}}{k+1}\,.
\end{align}
\else
\begin{align}
P_1&(t_0, \Delta_t)  = \sum_{k_0 = N_{\mathrm{res}}}^\infty P_{\mathrm{res}}(k_0, t_0)\left(1 - \frac{N_{\mathrm{res}}}{k_0+1}\right)\nonumber\\
&\sum_{k_1 = N_{\mathrm{res}}}^\infty P_{\mathrm{res}}(k_1 | k_{0})\left(1 - \frac{N_{\mathrm{res}}}{k_1+1}\right) \cdots\nonumber\\
&\sum_{k_{(\Delta_t-1)} = N_{\mathrm{res}}}^\infty P_{\mathrm{res}}(k_{(\Delta_t-1)} | k_{(\Delta_t-2)})\left(1 - \frac{N_{\mathrm{res}}}{k_{(\Delta_t-1)}+1}\right)\nonumber\\
&\sum_{k=0}^{N_{\mathrm{res}}-1} P_{\mathrm{res}}(k | k_{(\Delta_t-1)}) + \sum_{k = N_{\mathrm{res}}}^\infty P_{\mathrm{res}}(k | k_{(\Delta_t-1)})\frac{N_{\mathrm{res}}}{k+1}\,.
\end{align}
\fi

As obvious, $P_1$ crucially depends on the resource distribution function  $P_{\mathrm{res}}(N, t)$, which is the probability that $N$ resources arrive at time slot $t$, and its probability distribution conditioned on the previous number of resource arrivals $P_{\mathrm{res}}(k_t | k_{t-1})$. The properties and formulation of this distribution is evaluated more in detail in Appendix~\ref{sec:res_distr}.

However, $P_1(t_0, \Delta_t)$ the exact formulation of $P_1(t_0, \Delta_t)$ poses computational problems due to the infinite sums and the exponential growth of computation for increasing $\Delta_t$. Hence, we introduce Lemma~\ref{lem:approx_res} to simplify the computation of the scheduling probability. 

\begin{lemma}\label{lem:approx_res}
For a moderately loaded system with $\sum_{k=0}^{N_\mathrm{max}}P(k,t) \approx 1$ and $N_\mathrm{max} \gtrapprox N_\mathrm{res}$, the resource distribution function is approximated for sufficiently large time slots $t$ by \begin{math}
P_{\mathrm{res}}(N_{t}, t) \approx \sum_{k=0}^{N_\mathrm{res}-1} P_\mathrm{res}(k, t-1) P_{\mathrm{res}}(N_{t} | k) + \sum_{k=N_\mathrm{res}}^{N_\mathrm{max}} P_\mathrm{res}(k, t-1) P_{\mathrm{res}}(N_{t} | N_\mathrm{res})\,.
\end{math}
\end{lemma}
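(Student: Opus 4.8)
The plan is to derive the approximation from the exact one-step recursion for the resource-arrival distribution and then introduce two controlled simplifications, each justified by one of the two hypotheses. The starting observation is that the number of requested resources at slot $t$ depends on the past only through the demand $k_{t-1}$ at the previous slot: the unserved backlog $\max(k_{t-1}-N_{\mathrm{res}},0)$ is carried forward and combined with fresh arrivals, exactly as encoded by the scheduling factors $\bigl(1-\tfrac{N_{\mathrm{res}}}{k+1}\bigr)$ appearing in Appendix~\ref{sec:sched_prob}. Hence the demand sequence is a Markov chain with transition kernel $P_{\mathrm{res}}(\cdot\mid k_{t-1})$, and the law of total probability (Chapman--Kolmogorov) gives the exact identity
\begin{equation}
P_{\mathrm{res}}(N_{t}, t) = \sum_{k=0}^{\infty} P_{\mathrm{res}}(k, t-1)\, P_{\mathrm{res}}(N_{t} \mid k)\,,
\label{eq:ckexactrec}
\end{equation}
which is the object I would start from; the remaining work is to show its right-hand side is well approximated by the stated finite expression.

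First I would split the sum in \eqref{eq:ckexactrec} at $k=N_{\mathrm{res}}$ into a low-demand part $0\le k<N_{\mathrm{res}}$, kept verbatim with no approximation, and a high-demand part $k\ge N_{\mathrm{res}}$, on which I apply two steps. Step one truncates the infinite tail at $N_\mathrm{max}$: for sufficiently large $t$ the moderate-load hypothesis $\sum_{k=0}^{N_\mathrm{max}}P_{\mathrm{res}}(k,t-1)\approx 1$ forces $\sum_{k>N_\mathrm{max}}P_{\mathrm{res}}(k,t-1)\approx 0$, so the discarded terms carry negligible mass (each conditional factor being bounded by one). Step two replaces the kernel $P_{\mathrm{res}}(N_{t}\mid k)$ by its boundary value $P_{\mathrm{res}}(N_{t}\mid N_{\mathrm{res}})$ for every $k$ in the retained window $N_{\mathrm{res}}\le k\le N_\mathrm{max}$. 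Combining the untouched low-demand part with these two steps on the high-demand part yields precisely
\begin{equation}
P_{\mathrm{res}}(N_{t}, t) \approx \sum_{k=0}^{N_\mathrm{res}-1} P_{\mathrm{res}}(k, t-1)\, P_{\mathrm{res}}(N_{t} \mid k) + \sum_{k=N_\mathrm{res}}^{N_\mathrm{max}} P_{\mathrm{res}}(k, t-1)\, P_{\mathrm{res}}(N_{t} \mid N_\mathrm{res})\,,
\end{equation}
as claimed.

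The hard part will be justifying step two, the collapse of the kernel on the high-demand part. The key structural fact is that in the overloaded regime $k\ge N_{\mathrm{res}}$ the transition depends on $k$ only through the carried-over backlog $k-N_{\mathrm{res}}$, shifted by the fresh arrivals; at the boundary $k=N_{\mathrm{res}}$ this backlog vanishes and the kernel reduces to the pure arrival distribution. The second hypothesis $N_\mathrm{max}\gtrapprox N_{\mathrm{res}}$ then guarantees that across the entire retained window the backlog $k-N_{\mathrm{res}}$ lies in the small interval $[0,\,N_\mathrm{max}-N_{\mathrm{res}}]$, so each kernel $P_{\mathrm{res}}(\cdot\mid k)$ is close to the zero-backlog kernel $P_{\mathrm{res}}(\cdot\mid N_{\mathrm{res}})$. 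I would make this quantitative by bounding the total-variation distance between $P_{\mathrm{res}}(\cdot\mid k)$ and $P_{\mathrm{res}}(\cdot\mid N_{\mathrm{res}})$ by a multiple of $k-N_{\mathrm{res}}$, and then noting that this per-term error is further suppressed by the small probability mass $P_{\mathrm{res}}(k,t-1)$ sitting in the overloaded tail. The replacement error is therefore of second order --- a small deviation weighted by a small probability --- and I would close by collecting it together with the truncated tail mass to confirm that the total approximation error is of the same negligible order controlled by the moderate-load assumption.
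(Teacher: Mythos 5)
Your proposal follows the paper's own proof essentially step for step: the paper likewise starts from the one-step recursion $P_{\mathrm{res}}(N_{t}, t) = \sum_{k=0}^{\infty} P_{\mathrm{res}}(k, t-1)\, P_{\mathrm{res}}(N_{t} \mid k)$, splits the sum at $N_\mathrm{res}$, discards the mass above $N_\mathrm{max}$ using the moderate-load hypothesis, and collapses the kernel to $P_{\mathrm{res}}(\cdot \mid N_\mathrm{res})$ on the overloaded range because the carried-over backlog $k - N_\mathrm{res} \le N_\mathrm{max} - N_\mathrm{res}$ is small. The only difference is in how the kernel-collapse error is dismissed: you propose a total-variation bound proportional to the backlog and weighted by the small tail mass, whereas the paper invokes an additional no-underutilization assumption ($N_\mathrm{min} > N_\mathrm{max} - N_\mathrm{res}$) to argue that the two kernels can only disagree appreciably at small values of $N_t$, where both are negligible anyway.
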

\begin{proof}
Assuming a converging behavior of the resource distribution function, there exits a time slot $t_0$ and a lower bound $N_\mathrm{min}$ and an upper bound $N_\mathrm{max}$, such that $\sum_{k=N_\mathrm{min}}^{N_\mathrm{max}} P_{\mathrm{res}}(k, t) \approx 1$ for all $t \geq t_0$. Additionally for an non-heavily loaded system which is required for \ac{URLLC} traffic, we assume $N_\mathrm{max} \gtrapprox N_{\mathrm{res}}$. Also, the lower bound is assumed to be sufficiently large, $N_\mathrm{min} > N_\mathrm{max} - N_\mathrm{res}$. 

The resource distribution function at time slot $t_1 > t_0$ is formulated as
\begin{equation}
P_{\mathrm{res}}(N_{t_1}, t_1) = \sum_{N_{t_1-1}=0}^\infty P_\mathrm{res}(N_{t_1-1}, t_1-1) P_{\mathrm{res}}(N_{t_1} | N_{t_1-1})\,.
\end{equation}
The sum can be divided into two regions, below $N_\mathrm{res}$ and above. Since $P_{\mathrm{res}}(N, t) \rightarrow 0$ for any $N > N_\mathrm{max}$ and $N_\mathrm{max}$ is close to the number of resources of the system, we approximate the conditional function by assuming $N_\mathrm{res}$ resources in the previous time slot. For a moderately loaded system, this is a valid assumption, since the resource probability distribution function is decreasing fast for $N > N_\mathrm{res}$. Only for small arguments $N_t$ close to $0$ the deviation increases. However, the constraint regarding $N_\mathrm{min}$, which prevents underutilization, ensures that $P_\mathrm{res}(N_{t_1} | N_{t_1-1})$ is getting very small in that region anyway. Hence, we approximate the conditional resource distribution probability for $N_{t-1} > N_\mathrm{res}$ by
\begin{equation}\label{eq:apprx_cres}
P_{\mathrm{res}}(N_{t} | N_{t-1}) \approx 
\sum_{n = 0}^{N_{t}} P_{\mathrm{A}}(n)
\cdot P_{\mathrm{H}}(N_{t} - n|N_{\mathrm{res}})\,.
\end{equation}
\end{proof}

Using Lemma~\ref{lem:approx_res} for $\Delta_t > 0$, the scheduling probability is approximated by
\ifone
\begin{align}
P_1(t_0, \Delta_t)  &\approx \sum_{k_0 = N_{\mathrm{res}}}^{N_\mathrm{max}} P_{\mathrm{res}}(k_0, t_0)\left(1 - \frac{N_{\mathrm{res}}}{k_0+1}\right) \left( \sum_{k_1 = N_{\mathrm{res}}}^{N_\mathrm{max}} P_{\mathrm{res}}(k_1 | N_\mathrm{res})\left(1 - \frac{N_{\mathrm{res}}}{k_1+1}\right)\right)^{(\Delta_t - 1)}\nonumber\\
&\quad\quad\left( \sum_{k=0}^{N_{\mathrm{res}}-1} P_{\mathrm{res}}(k|N_\mathrm{res}) + \sum_{k = N_{\mathrm{res}}}^{N_\mathrm{max}} P_{\mathrm{res}}(k |N_\mathrm{res})\frac{N_{\mathrm{res}}}{k+1} \right)\,.
\end{align}
\else
\begin{align}
P_1&(t_0, \Delta_t)  \approx \sum_{k_0 = N_{\mathrm{res}}}^{N_\mathrm{max}} P_{\mathrm{res}}(k_0, t_0)\left(1 - \frac{N_{\mathrm{res}}}{k_0+1}\right) \nonumber\\
&\left( \sum_{k_1 = N_{\mathrm{res}}}^{N_\mathrm{max}} P_{\mathrm{res}}(k_1 | N_\mathrm{res})\left(1 - \frac{N_{\mathrm{res}}}{k_1+1}\right)\right)^{(\Delta_t - 1)}\nonumber\\
&\left( \sum_{k=0}^{N_{\mathrm{res}}-1} P_{\mathrm{res}}(k|N_\mathrm{res}) + \sum_{k = N_{\mathrm{res}}}^{N_\mathrm{max}} P_{\mathrm{res}}(k |N_\mathrm{res})\frac{N_{\mathrm{res}}}{k+1} \right)\,.
\end{align}
\fi

\section{Resource distribution function of a system with finite resources}
\label{sec:res_distr}
The resource distribution function describes the probability of having a specific number of resources $N$ to be scheduled at a time slot $t$. With the \replaced{previously mentioned}{aforementioned} system setup mainly three components contribute to resource allocations. The first are the packet arrival processes of the individual \acp{UE}. These pose the main component. Additionally, there are the \ac{HARQ} retransmissions, which depend on the error probability of the underlying channel code for a specific channel. However, to simplify analysis a uniform \ac{BLER} has been assumed for each of the transmissions. The last component is the overload of the previous time slot due to resource shortage, which is then transfered to the next time slot. Hence, the resource distribution is described as 
\begin{equation}\label{eq:resd}
P_{\mathrm{res}}(N, t) = \sum_{n,m,o \in \mathcal{S}} P_{\mathrm{A}}(n) P_{\mathrm{H}}(m, t-T_{\mathrm{RTT}}) P_{\mathrm{OL}}(o, t - 1)\,,
\end{equation}
with $\mathcal{S} = \{n,m,o \in \mathbb{N}_0 : n+m+o = N\}$, $N \in \mathbb{N}_0$ and $t \in \mathbb{Z}$ and $P_{\mathrm{A}}(n)$ being the probability of having $n$ arrival processes, $P_{\mathrm{H}}(m)$ being the probability of having $m$ \ac{HARQ} retransmissions in time slot $t$ and $P_{\mathrm{OL}}(o, t)$ being the probability of having $o$ resources overload in the time slot $t$ to be transferred to the next time slot.

The probability of arrival processes for $N_{\mathrm{UE}}$ \acp{UE} is described straightforwardly as a binomial distribution for $n \leq N_{\mathrm{UE}}$ i.e.\ 
\begin{equation}
P_{\mathrm{A}}(n) = 
\binom{N_{\mathrm{UE}}}{n} (P_\mathrm{A,UE})^{n} (1 - P_\mathrm{A,UE})^{N_{\mathrm{UE}}-n}\,,
\end{equation}
and otherwise $P_{\mathrm{A}}(n) = 0$,
where $P_\mathrm{A,UE}$ is the probability of packet arrival of one \ac{UE} at one time slot. This modeling implicitly assumes that one \ac{UE} can only have at most one new transmission per time slot.

Formulating $P_{\mathrm{H}}$ is a bit more intricate since for a limited allowed number of \ac{HARQ} retransmissions initial packet transmissions have to be distinguished probability-wise from \ac{HARQ} retransmissions. This would require to distinguish initial transmissions and first, second up to $n$ retransmissions as separate dependencies in $P_\mathrm{res}$ and would require to specify scheduling rules, which would considerably complicate the whole analysis. However, this limitation is overcome by allowing unlimited \ac{HARQ} retransmissions. This implies that this approach cannot be used to analyze for example single-retransmission HARQ since the HARQ retransmission term assuming an infinite number of retransmissions as implemented below would drastically overestimate the system load from HARQ retransmissions hence punishing FPR too much. Hence, $P_{\mathrm{H}}$ is given for $t \geq 0$ and $n \leq N_{\mathrm{res}}$ as
\ifone
\begin{align}
P_{\mathrm{H}}(n, t) =& 
\sum_{k=n}^{\infty} P_\mathrm{res}(k,t) \binom{\min(k,N_{\mathrm{res}})}{n}\cdot {P_\mathrm{r}}^n (1 - P_\mathrm{r})^{(\min(k,N_{\mathrm{res}})-n)}\,,
\end{align}
\else
\begin{align}
P_{\mathrm{H}}(n, t) =& 
\sum_{k=n}^{\infty} P_\mathrm{res}(k,t) \binom{\min(k,N_{\mathrm{res}})}{n}\nonumber\\
&\cdot {P_\mathrm{r}}^n (1 - P_\mathrm{r})^{(\min(k,N_{\mathrm{res}})-n)}\,,
\end{align}
\fi
and otherwise $P_{\mathrm{H}}(n, t) = 0$ except for $P_{\mathrm{H}}(0, t<0) = 1$,
where $N_{\mathrm{res}}$ is the number of system resources per time slot, $T_{RTT}$ is the \ac{HARQ} \ac{RTT} and the single-retransmission probability $P_\mathrm{r} = (1 - P_\mathrm{fn})P_\mathrm{e} + P_\mathrm{fp}(1-P_\mathrm{e})$ as in (\ref{eq:ETr}). Because of notational reasons, we chose using an infinite sum, which can be easily replaced by splitting the sum at $N_{\mathrm{res}}$ and replacing the part from $N_{\mathrm{res}}+1$ to $\infty$ by $\left( 1 - \sum_{k=0}^{N_{\mathrm{res}}} P_\mathrm{res}(k,t-T_{RTT}) \right)\binom{N_{\mathrm{res}}}{n} (P_\mathrm{r})^n (1 - P_\mathrm{r})^{(N_{\mathrm{res}}-n)}$. Still, this way of evaluating the HARQ-contributions in the system still overestimates the load from retransmissions and therefore underestimates the system performance.

The last component $P_{\mathrm{OL}}$ is simply defined by a back reference to the resource distribution function in the previous slot i.e.\ 
\begin{equation}
P_{\mathrm{OL}}(n, t) = 
\begin{cases}
\ifone
\vspace{-5pt}
\fi
P_{\mathrm{res}}(N_{\mathrm{res}} + n, t)\,, & \text{if } t \geq 0 \wedge n > 0\\\ifone\vspace{-5pt}\fi
\sum_{k=0}^{N_{\mathrm{res}}} P_{\mathrm{res}}(n, t)\,, & \text{if } t \geq 0 \wedge n = 0\\\ifone\vspace{-5pt}\fi
1\,, & \text{if } t < 0 \wedge n = 0\\\ifone\vspace{-5pt}\fi
0\,, & \text{otherwise}
\end{cases}\,.
\end{equation}

For the sake of simplicity, we may assume $T_{\mathrm{RTT}} = 1$. This assumption makes the resource distribution function at time slot $t$ only dependent on the previous time slot $t-1$ and is a valid assumption for the evaluated early \ac{HARQ} schemes. 

\begin{figure}[!t]
\centering     
\ifone
\subfigure[$N_{\mathrm{UE}} = 30$]{\label{fig:resdistr_ovrld}\includegraphics[width=.28\columnwidth]{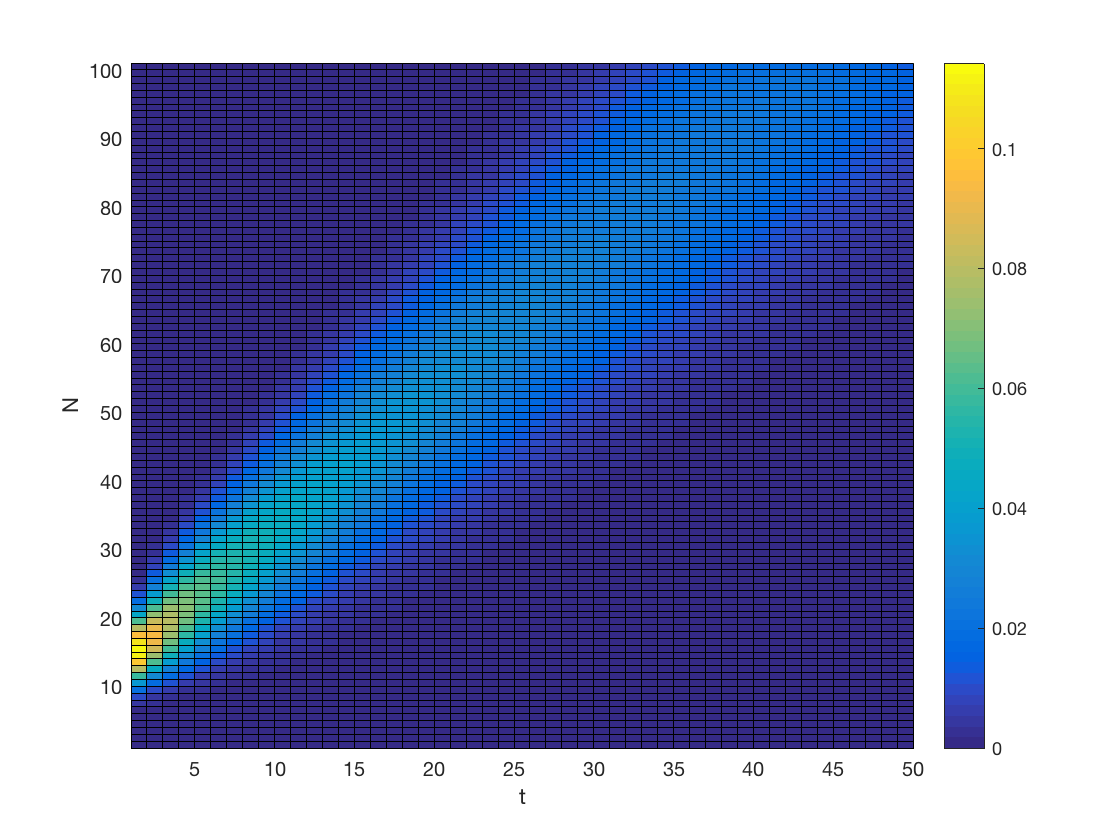}}
\subfigure[$N_{\mathrm{UE}} = 20$]{\label{fig:resdistr_balanced}\includegraphics[width=.28\columnwidth]{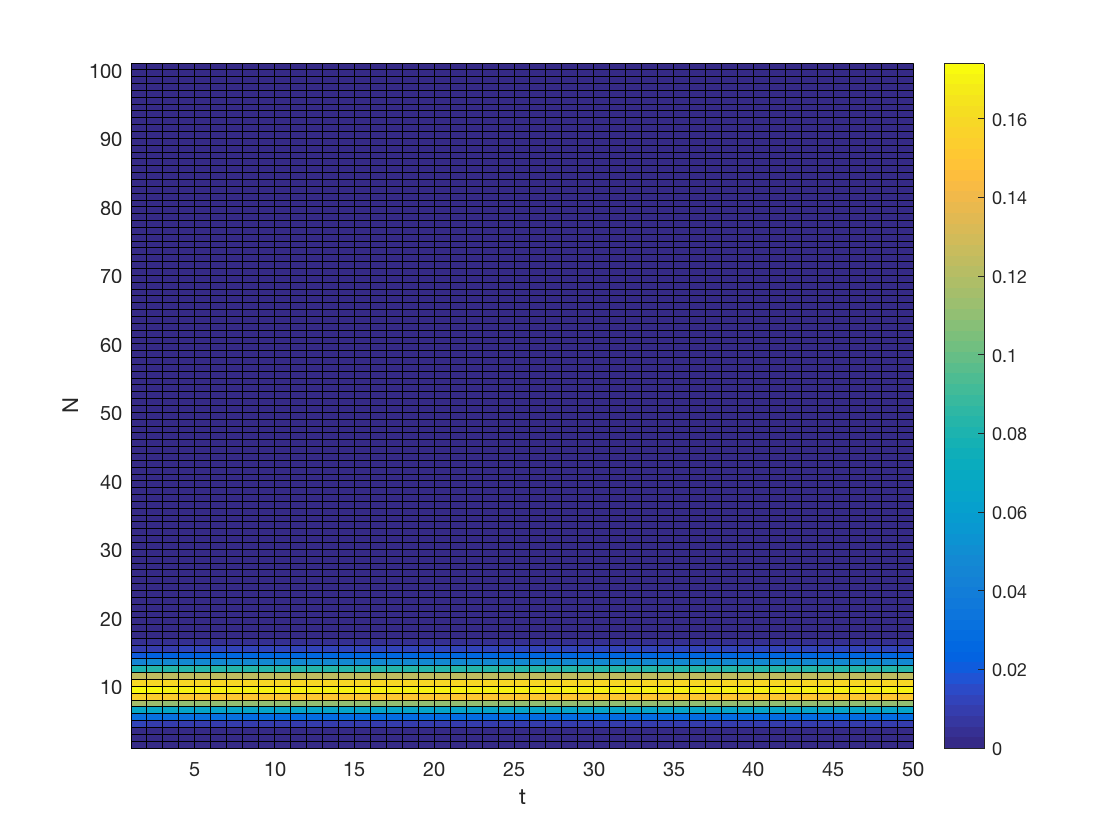}}
\else
\subfigure[$N_{\mathrm{UE}} = 30$]{\label{fig:resdistr_ovrld}\includegraphics[width=.48\columnwidth]{resource_distribution_overload.png}}
\subfigure[$N_{\mathrm{UE}} = 20$]{\label{fig:resdistr_balanced}\includegraphics[width=.48\columnwidth]{resource_distribution.png}}
\fi
\caption{Non-converging and converging resource distribution functions over time of an overloaded system (left) and a balanced  system (right).}
\label{fig:resdistr}
\end{figure}

Here, the interesting question is, if the resource distribution converges for $t \rightarrow \infty$. By simulating the propagation of $P_{\mathrm{res}}(N, t)$ over $t$, we gain an insight on that question, as presented in Figure~\ref{fig:resdistr}. As obvious in Figure~\ref{fig:resdistr_ovrld}, choosing the parameters such that the system is massively overloaded results in divergence of the resource distribution function. However, in case of a balanced system the resource distribution function shows a strong convergence behavior, as noticeable in Figure~\ref{fig:resdistr_balanced}.
From (\ref{eq:resd}), the conditioned resource distribution function for $t>0$ and $N_{t} \geq N_{t-1}-N_{\mathrm{res}}$ follows as \enlargethispage{2ex}
\begin{equation}\label{eq:cond_resd}
P_{\mathrm{res}}(N_{t} | N_{t-1}) = 
\sum_{n = 0}^{N_{\mathrm{up}}} P_{\mathrm{A}}(n)\\
P_{\mathrm{H}}(N_{\mathrm{up}} - n|N_{t-1})\,,
\end{equation}
where $N_{\mathrm{up}} = N_{t} - \max(N_{t-1}-N_{\mathrm{res}},0)$ and for $m \leq \min(N_{\mathrm{res}},N_{t-1})$
\ifone
\begin{align}
P_{\mathrm{H}}(m | N_{t-1}) =& 
\binom{\min(N_{t-1},N_{\mathrm{res}})}{m}(P_\mathrm{r})^m (1 - P_\mathrm{r})^{(\min(N_{t-1},N_{\mathrm{res}})-m)}\,,
\end{align}
\else
\begin{align}
P_{\mathrm{H}}(m | N_{t-1}) =& 
\binom{\min(N_{t-1},N_{\mathrm{res}})}{m}\nonumber\\
&\cdot (P_\mathrm{r})^m (1 - P_\mathrm{r})^{(\min(N_{t-1},N_{\mathrm{res}})-m)}\,,
\end{align}
\fi
otherwise $P_{\mathrm{H}}(m | N_{t-1}) = 0$.

\end{document}